\newcommand{\inst}[1]{$^{#1}$}
\DeclarePairedDelimiter{\scProd}{\langle}{\rangle}
\DeclarePairedDelimiter{\braces}{\{}{\}}
\newtheorem{theorem}{Theorem}[section]
\newtheorem{lemma}[theorem]{Lemma}
\newtheorem{definition}[theorem]{Definition}
\def \qedA {\hspace*{\fill} $\square$}
\def \qed {{\square\hfill}}
\newenvironment{proof}{
    \par
    \emph{Proof: }
} {
    \qedA
}
\newcommand{\ZZ}{\mathbb{Z}}
\newcommand{\RR}{\mathbb{R}}
\newcommand{\cX}{{\cal X}}
\newcommand{\cE}{{\cal E}}
\newcommand{\numpluses}[1] {\ensuremath{|V_{+}(#1)|}}
\newcommand{\spinclassA} {\ensuremath{\Lambda\negthinspace{\genfrac{}{}{0pt}{2}{--}{\phantom{-}-}}}}
\newcommand{\spinclassB} {\ensuremath{\Lambda\negthinspace{\genfrac{}{}{0pt}{2}{+-}{\phantom{-}-}}}}
\newcommand{\spinclassC} {\ensuremath{\Lambda\negthinspace{\genfrac{}{}{0pt}{2}{+-}{\phantom{-}+}}}}
\newcommand{\spinclassD} {\ensuremath{\Lambda\negthinspace{\genfrac{}{}{0pt}{2}{++}{\phantom{-}+}}}}
\newcommand{\spinclassE} {\ensuremath{\Lambda\negthinspace{\genfrac{}{}{0pt}{2}{-+}{\phantom{-}+}}}}
\newcommand{\spinclassF} {\ensuremath{\Lambda\negthinspace{\genfrac{}{}{0pt}{2}{-+}{\phantom{-}-}}}}
\newcommand{\numspinsB} {\left|\spinclassB\right|}
\newcommand{\numspinsC} {\left|\spinclassC\right|}
\newcommand{\numspinsD} {\left|\spinclassD\right|}
\newcommand{\numspinsE} {\left|\spinclassE\right|}
\newcommand{\numspinsF} {\left|\spinclassF\right|}
\newcommand{\indicator}[1]{\ensuremath{\chi_{\braces{#1}}}}
\begin{document}

\title{Shaken dynamics: an easy way to parallel Markov Chain Monte Carlo}

\author{
Valentina Apollonio\inst{1}\and
Roberto D'Autilia\inst{1}\and
Benedetto Scoppola\inst{2} \and
Elisabetta Scoppola\inst{1}\and
Alessio Troiani\inst{3,4}}

\maketitle
 
\begin{center} {
    \footnotesize
    \vspace{0.3cm} \inst{1}  Dipartimento di Matematica e Fisica, Universit\`a
    Roma Tre\\
    Largo San Murialdo, 1 - 00146 Roma, Italy\\

    \vspace{0.3cm}\inst{2}  Dipartimento di Matematica, Universit\`a di Roma
    ``Tor Vergata''\\
    Via della Ricerca Scientifica, 1 - 00133 Roma, Italy\\

    \vspace{0.3cm}\inst{3}  Dipartimento di Matematica ``Tullio Levi--Civita'', Universit\`a di Padova\\
    Via Trieste, 63 - 35121 Padova, Italy

    \vspace{0.3cm}\inst{4}
    Dipartimento di Matematica ``Guido Castelnuovo'', Università
    di Roma ``La Sapienza''\\
    Piazzale Aldo Moro, 5 - 00185 Roma, Italy
}

\end{center}

\vskip1.truecm

\begin{abstract}
    \noindent
    We define a class of Markovian parallel dynamics for spin systems on arbitrary graphs with nearest neighbor interaction described by a Hamiltonian function $H(\sigma)$.\\
    These dynamics turn out to be reversible and their stationary measure is explicitly determined.\\
    Convergence to equilibrium and relation of the stationary
    measure to the usual Gibbs measure are discussed when the 
    dynamics is defined on $\mathbb{Z}^2$.\\
    Further it is shown how these dynamics can be used to define natively parallel algorithms to face problems in the context of combinatorial optimization.
\end{abstract}

%-------------------------------------------------
%  INTRODUCTION
%-------------------------------------------------
\section{Introduction}
\label{Intro}

We introduce a class of parallel dynamics,
called \emph{shaken dynamics}, to study spin systems on arbitrary graphs $G=(V, E)$ with general interaction given by
\begin{align}\label{eq:reference_hamiltonian}
H(\sigma)=-\sum_{e=\{x, y\} \in E} J_{x y} \sigma_{x} \sigma_{y}-2 \sum_{x \in V} \lambda_{x} \sigma_{x}
\end{align}
with $J_{x y}$ and $\lambda_{x}$ in $\mathbb{R}$ for 
$x, y \in V$, and $\sigma \in\{-1,+1\}^{V}$ a configuration on $G$.
The novelty of these dynamics is that transitions between configurations are obtained
through a combination of pairs of \emph{half steps}
each characterized by an \emph{asymmetric} interaction.

The shaken dynamics introduced here extend a class of PCA on spin systems, appeared in previous papers \cite{dss1, dss2, LS, pss1}, characterized by transition probabilities defined in terms of a pair Hamiltonian, i.e.

\begin{equation*}
    P(\sigma,\tau)=\frac{e^{-\beta H(\sigma,\tau)}}{\sum_{\tau'}e^{-\beta H(\sigma,\tau')}}
\end{equation*}

with $H(\sigma,\tau)=\sum_xh_x(\sigma)\tau_x$ and $\beta=\frac{1}{T}$, where $T$ is the temperature of the system. For each vertex $x$ the local field $h_x(\sigma)$ depends on the value of the spins in a neighborhood of $x$ and on the value of the spin at site $x$ itself through a self-interaction parameter $q>0$.
The dynamics considered in the aforementioned papers turn out to be reversible when the interaction with the spins in the neighborhood is symmetric and
irreversible otherwise. On one hand, reversibility allows for a better control of the stationary measure.
On the other hand, irreversible dynamics may exhibit a faster convergence to equilibrium
(see, e.g., \cite{KJZ, dss2}), though the control of the invariant measure is in general more complicated.

For the shaken dynamics we show that reversibility holds despite the presence of a non symmetric interaction. This fact allows for a
rather robust control of the invariant measure on arbitrary graphs also in the case of non-zero external field and different choices of boundary conditions.

We study extensively the shaken dynamics on the square lattice. In this case
we show that, for suitable values of the parameters, its stationary measure
tends, in the thermodynamic limit, to the Gibbs measure with Hamiltonian \eqref{eq:reference_hamiltonian}. 
This result is complemented by a thorough numerical investigation
carried over in \cite{d2021parallel}.
Moreover, we analyze the convergence
to equilibrium of the shaken dynamics in the low-temperature regime. We show that the asymmetric interaction induces a faster convergence with respect to symmetric PCA and single spin flip dynamics. In this respect, it is apparent that
the shaken dynamics retains some of the properties of the irreversible PCA considered
in \cite{dss2, pss2, LS}.
Hence, shaken dynamics benefits of the advantages 
of both reversible and irreversible dynamics.

A notable feature of shaken dynamics is the fact that their invariant measure turns out to be the marginal of the Gibbs measure on an induced
bipartite graph $G^b$ whose geometry is related to that of the original 
graph $G$. Tuning the self-interaction parameter $q$ tunes the geometry
of $G^b$. Leveraging on this feature we show how it is possible to control the critical behavior of the stationary measure of the shaken dynamics
defined on the triangular lattice, extending, in this way, the analogous
result provided in \cite{apollonio2019criticality} for the square lattice.

It is worth mentioning that one of the main reasons of interest on PCA,
and, in general, on parallel dynamics, is related to their numerical applications.
Indeed, parallelization could, at least in principle, speed up Markov Chain Monte Carlo. Even though
until a few years ago parallel computing was expensive and tricky, we have now powerful
and cheap parallel architectures, for instance based on GPU or even FPGA.
In this regard, we show how the class of shaken
dynamics defined on general graphs provides a straightforward manner to define \emph{natively} parallel Monte Carlo algorithms that can be used to tackle discrete optimization problems. Algorithms exploiting shaken dynamics are not bound to any particular computing architecture or graph structure and, hence, their performances are likely to benefit from the development of parallel computing often driven by applications not necessarily linked to academic research.

Shaken dynamics on general graphs are defined in
Section~\ref{Ggen}. It is shown how a shaken dynamics on an arbitrary $G=(V, E)$ is related to a dynamics on a corresponding bipartite graph $G^b=(V^b, E^b)$ where spins in each partition are alternatively updated.
Theorem~\ref{t0} identifies the stationary measure of the shaken dynamics and shows that it is the marginal of the Gibbs measure on the configuration space $\{-1, +1\}^{V^b}$. 
Section~\ref{Z2} is devoted to studying the shaken dynamics on $\mathbb{Z}^2$ with homogeneous
interaction. It is shown that, if the temperature of the system is sufficiently small, the invariant measure is close to the Gibbs measure for the Ising model on the square lattice.
As far as convergence to equilibrium is concerned, the Section provides, in the low temperature regime, a comparison for the tunneling times from the metastable to the stable state between the shaken dynamics, a symmetric PCA and a single spin flip dynamics.
Here theoretical results
are complemented by numerical simulations.
The result concerning the critical behavior for the shaken dynamics on the triangular lattice is provided in Section~\ref{geo_disc}.
In Section~\ref{gensh} the definition of shaken dynamics is generalized and Section~\ref{sec:application_to_optimization_problems} shows how these dynamics can be used in the context of combinatorial optimization.
Proofs are provided in Section~\ref{sec:proofs}.
Finally, Section~\ref{dop}
is devoted to final remarks including the possible applications of the shaken dynamics to model tidal dissipative effects in planetary systems.

%-------------------------------------------------
%  SHAKEN GENERAL GRAPHS
%-------------------------------------------------
\section{Shaken dynamics on general graphs}
\label{Ggen}
Let $G=(V,E)$ be a finite weighted graph and $\cX_V = \{-1,1\}^V$ be the set of spin configurations on $V$. We  consider
the nearest neighbor interaction between spins given by the Ising Hamiltonian in the general form:
\begin{align}\label{ham1}
    H(\sigma) 
    & =-\sum_{e=\{x,y\}\in E}J_{xy}\sigma_x\sigma_y -2\sum_{x\in V}\lambda_x\sigma_x\\
    & = -\sum_x\sum_y \frac{1}{2}J_{xy}\mathbb{1}_{\{x,y\}\in E} \sigma_x\sigma_y -2\sum_{x\in V}\lambda_x\sigma_x=-\scProd{\frac{1}{2}{\cal J}\sigma+2\lambda,\sigma}\notag
\end{align}

where the weight $J_{xy} \in \RR$ associated to the edge $\{x,y\}$, represents the interaction,
and can be written in  compact form as a symmetric matrix ${\cal J}$ 
%with
%{\cal J}_{xy}=J_{xy}\mathbb{1}_{\{x,y\}\in E}=J_{yx}\mathbb{1}_{\{x,y\}\in E}$ 
and we denote by $\scProd{\cdot,\cdot} $ the scalar product. The vector $\lambda=\{\lambda_x\}_{x\in V}$ is an external
field, possibly non constant.

We introduce a class of bipartite weighted graphs $G^b=(V^b,E^b)$ {\it doubling the interaction graph } $G$.
The idea is to duplicate the vertex set into two identical copies, $V^{(1)}$ and $V^{(2)}$, representing  the two parts of the 
vertex set of the bipartite graph. For each $x\in V$ we denote
by  $x^{(1)}, x^{(2)}$ the vertices corresponding to $x\in V$ in $V^{(1)}$ and in  $V^{(2)}$ respectively. The edges between 
$x^{(1)}$ and $x^{(2)}$ are all present, for any $x\in V$.  
On the other hand the edges between $x^{(1)}$ and $y^{(2)}$, 
with $x\not= y$, or between $y^{(1)}$ and $x^{(2)}$,
can be present only if $\{x,y\}\in E$. 
Exactly one edge among the two possibilities   
$(x^{(1)},y^{(2)})$ and $(y^{(1)},x^{(2)})$ is
in $E^b$ if $\{x,y\} \in E$.
This means that for any graph $G$ there are many {\it doubling graphs} $G^b$. 
Note that similar doubling graphs have already
been introduced in literature for
different purposes (see \cite{KV}).
More precisely:

\begin{definition}\label{compg}
    A bipartite weighed graph $G^b=(V^b, E^b)$ is the {\bf doubling graph} of $G=(V,E)$ if
    \begin{itemize}[label=\adfbullet{43}]
        \item the vertex set $V^b=V^{(1)}\cup V^{(2)}$ 
                 where the two parts $V^{(1)}$ and $V^{(2)}$ are two identical copies of $V$;
        \item for any $x\in V$ the edge 
                 $(x^{(1)}, x^{(2)})\in E^b$ and we call it a {\bf self-interaction edge};
        \item if $\{x,y\} \in E$ then one, 
                 and only one, between the two edges $\{x^{(1)}, y^{(2)}\}$ and $\{y^{(1)}, x^{(2)} \}$ is in $E^{b}$. We call this kind of edge an {\bf interaction edge}.
    \end{itemize}
\end{definition}

To construct a doubling graph 
starting from the interaction graph $G=(V,E)$, 
define a new oriented graph $G^o=(V, E^o)$ simply
orienting the edges in an arbitrary way. Using the oriented edges the set 
$E^b$ is constructed as follows. 
For any $x\in V$ we have  the self-interaction edge $(x^{(1)}, x^{(2)})\in E^b$  with weight $w(x^{(1)}, x^{(2)})=q$ and for $x\not=y\in V$
we have $(x^{(1)}, y^{(2)})\in E^b$ if and only if  $(x,y)\in E^o$ with weight $w(x^{(1)}, y^{(2)})=J_{xy}$.

Note that the edges in $E^b$ are not oriented. However, by construction, the graph is bipartite, so that for any $e=\{x,y\}\in E^b$ we have $x\in V^{(1)}, y\in V^{(2)}$ or viceversa and so we consider in the definition the natural order in the edges
in $E^b$ by setting $e=(e^{(1)}, e^{(2)})$ with 
$e^{(1)}\in V^{(1)}, e^{(2)}\in V^{(2)}$. 
For this reason we can use the oriented edges in $E^o$ 
in order to define $E^b$. 

We will sometimes omit the superscripts $^{(1)}$ and $^{(2)}$ and we will always consider $(x,y)$ the ordered pair with
$x\in V^{(1)}, \; y\in V^{(2)}$, and $\{x,y\}$ the unordered pair with $x,y\in V^b$.

\begin{figure}
	\centering
 		\subfigure[]{\includegraphics[width=0.4\textwidth]{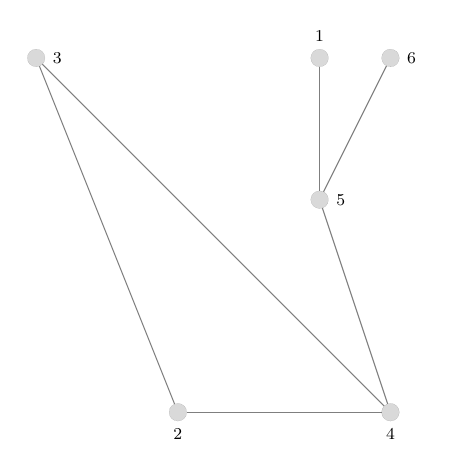}}
		\hfill
		\subfigure[]{\includegraphics[width=0.4\textwidth]{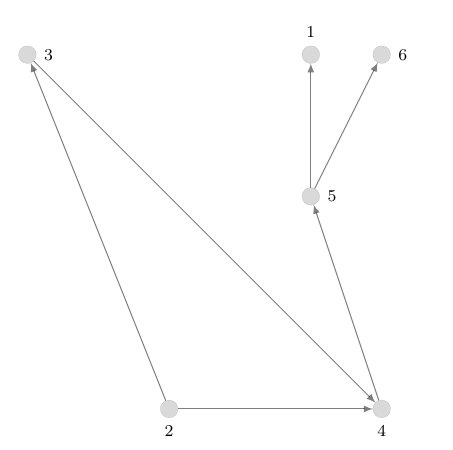}}
		\caption{An undirected graph (a) and a possible choice for the related directed graph (b)}
	\label{fig:undirected_and_directed_graph}
\end{figure}

\begin{figure}[tbh]
\centering
\includegraphics[width=0.7\linewidth]{./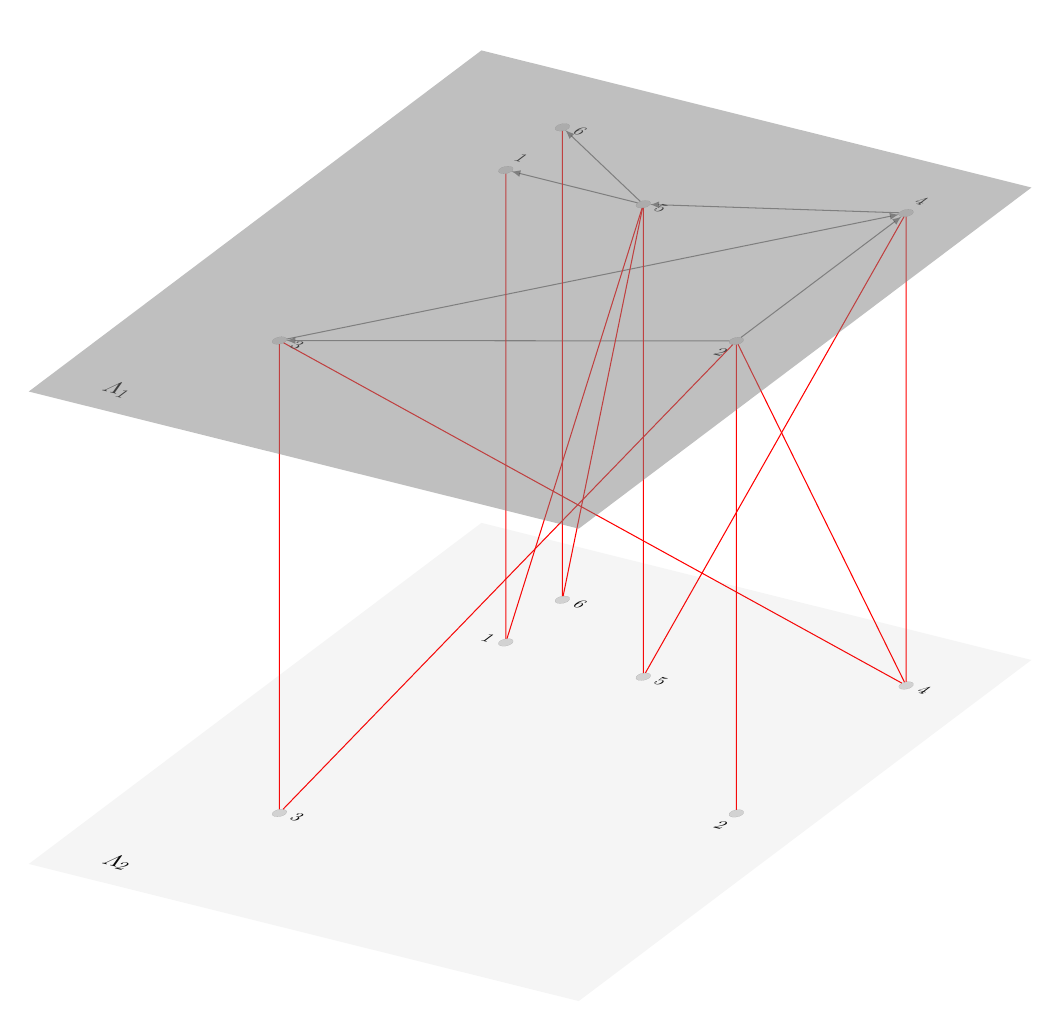}
\caption{The doubling of the graph of 
    Fig.~\ref{fig:undirected_and_directed_graph}(a)
	obtained from the directed graph of Fig.~\ref{fig:undirected_and_directed_graph}(b)}
\end{figure}
\begin{definition}\label{comph}
The pair Hamiltonian $H(\sigma^{(1)},\sigma^{(2)})$ is the {\bf doubling} of the Hamiltonian (\ref{ham1}) with interaction graph $G$ if there exists a doubling graph $G^b=(V^b,E^b)$ of $G$ such that
  $H(\hbox{{\boldmath $ \sigma$}})$,  defined on  the spin configurations $\hbox{{\boldmath $ \sigma$}}\equiv(\sigma^{(1)},\sigma^{(2)})\in \cX_{V^b} = \{-1,1\}^{V^b}$, 
  can be written as 
\begin{align}\label{Hs}
H(\hbox{{\boldmath $ \sigma$}})=-\sum_{\{x,y\}\in E^b}w(x,y)\hbox{{\boldmath $ \sigma$}}_x\hbox{{\boldmath $ \sigma$}}_y
-\sum_{x\in V^b}\lambda_x\hbox{{\boldmath $ \sigma$}}_x
\end{align}
with $w(x,y)=q > 0$ if $\{x,y\}$ is a self interaction edge and  $w(x,y)=J_{xy} \in \mathbb{R}$ otherwise and
with  $\lambda_{x^{(1)}}=\lambda_{x^{(2)}}=\lambda_x \in \mathbb{R}$.
\end{definition}

In a more explicit way we can write
\begin{align}\label{ham21}
\begin{aligned}
H(\hbox{{\boldmath $ \sigma$}}) & \equiv H(\sigma^{(1)},\sigma^{(2)})\\
 	& =-\sum_{(x,y)\in E^b}J_{xy}\sigma^{(1)}_x\sigma^{(2)}_y -\sum_{x\in V}\Big(q\sigma^{(1)}_x\cdot\sigma^{(2)}_x+\lambda_x(\sigma^{(1)}_x+\sigma^{(2)}_x)\Big)\\
	& = -\sum_{x\in V} \Big(\sigma^{(1)}_x h_x^{2\to 1}(\sigma^{(2)})+\lambda_x\sigma^{(2)}_x\Big)\\
	& = -\sum_{x\in V} \Big(\sigma^{(2)}_x h_x^{1\to 2}(\sigma^{(1)})+\lambda_x\sigma^{(1)}_x\Big)
\end{aligned}
\end{align}
with
\begin{equation*}
    h_x^{2\to 1}(\sigma^{(2)}) =
    \sum_{y\in V: (x,y) \in E^b}\Big(J_{xy}\sigma^{(2)}_{y}\Big) +q\sigma^{(2)}_x +\lambda_x
\end{equation*}
and
\begin{equation*}
    h_x^{1\to 2}(\sigma^{(1)}) = 
    \sum_{y\in V: (y,x) \in E^b}
    \Big(J_{xy}\sigma^{(1)}_{y}\Big) +q\sigma^{(1)}_x +\lambda_x
\end{equation*}
By defining ${{\cal J}^o}$ the matrix of oriented interaction, i.e., 
${{\cal J}^o}_{xy}= J_{xy}\mathbb{1}_{(x,y)\in E^o}$, and its transposed ${{\cal J}^o}^T$ corresponding to the opposite orientation,
we can write 
\begin{eqnarray*}
    h_x^{2\to 1}(\sigma^{(2)}) & = & ({{\cal J}^o}\sigma^{(2)})_x+q\sigma^{(2)}_x +\lambda_x
    \\
    h_x^{1\to 2}(\sigma^{(1)}) & = & ({{\cal J}^o}^T\sigma^{(1)})_x+q\sigma^{(1)}_x +\lambda_x
\end{eqnarray*}
and
\begin{align*}
\begin{aligned}
    H(\sigma^{(1)},\sigma^{(2)}) 
    & =
    -\scProd{\sigma^{(1)}, {{\cal J}^o}\sigma^{(2)}}
    -q\scProd{\sigma^{(1)},\sigma^{(2)}}
    -\scProd{\lambda,\sigma^{(1)}}
    -\scProd{\lambda,\sigma^{(2)}} \\
    & =
    -\scProd{{{\cal J}^o}^T\sigma^{(1)}, \sigma^{(2)}}
    -q\scProd{\sigma^{(1)},\sigma^{(2)}}
    -\scProd{\lambda,\sigma^{(1)}}
    -\scProd{\lambda,\sigma^{(2)}}
\end{aligned}
\end{align*}

If we consider the case $\sigma^{(1)}=\sigma^{(2)}=\sigma$, i.e.,  
$\sigma^{(1)}_x=\sigma^{(2)}_x$ for any $x\in V$, then we have
$H(\hbox{{\boldmath $ \sigma$}}) \equiv H(\sigma,\sigma)=H(\sigma)- q|V|$.
Indeed we have immediately ${\cal J}={{\cal J}^o}+{{\cal J}^o}^T$.

We construct now  the {\it shaken dynamics} on the state space $\cX_V$   by considering two subsequent updating
defined as follows:
\begin{equation}
    P^{1\to 2}(\sigma,\sigma'):=\frac{e^{-\beta H(\sigma,\sigma')}}{\overrightarrow{Z}_\sigma}=\prod_{x\in V}\frac{e^{\beta h^{1\to 2}_x(\sigma)\sigma'_x}}{2\cosh{\beta h^{{1\to 2}}_x(\sigma) }}\quad \hbox{ with } \quad \overrightarrow{Z}_\sigma=\sum_{\zeta\in\cX_V}e^{-\beta H(\sigma,\zeta)}
\end{equation}
\begin{equation}
    P^{2\to 1}(\sigma',\tau):=\frac{e^{-\beta H(\tau,\sigma')}}{\overleftarrow{Z}_{\sigma'}}=\prod_{x\in V}\frac{e^{\beta h^{2\to 1}_x(\tau)\sigma'_x}}{2\cosh{\beta h^{{2\to 1}}_x(\tau) }}\quad \hbox{ with } \quad \overleftarrow{Z}_{\sigma'}=\sum_{\zeta\in\cX_V}e^{-\beta H(\zeta,\sigma')}
\end{equation}
\begin{equation}
    P^{sh}(\sigma,\tau)=\sum_{\sigma'\in\cX_V}P^{1\to 2}(\sigma,\sigma')P^{2\to 1}(\sigma',\tau)=\sum_{\sigma'\in\cX_V}
    \frac{e^{-\beta H(\sigma,\sigma')}}{\overrightarrow{Z}_\sigma}\frac{e^{-\beta H(\tau,\sigma')}}{\overleftarrow{Z}_{\sigma'}}
\end{equation}

We state the result on the shaken dynamics in this general context.

\begin{theorem}\label{t0}
The stationary measure  of the shaken dynamics is
\begin{align}\label{mispi}
\pi(\sigma)=\frac{\overrightarrow{Z}_\sigma}{Z} \quad\text{with} \quad \overrightarrow{Z}_\sigma:=\sum_\tau e^{-\beta H(\sigma,\tau)} \; \text{ and }\;
Z:=\sum_{\sigma, \tau }e^{-\beta  H(\sigma,\tau)}
\end{align}
and  reversibility holds. This stationary measure  is the marginal of the Gibbs measure on the space $\cX_{V^b}$ of pairs of configurations
$\hbox{{\boldmath $ \sigma$}}:=(\sigma^{(1)},\sigma^{(2)}) $ defined by:
\begin{equation}\label{pi2}
{{\pi^b}}(\hbox{\boldmath $ \sigma$}):=\frac{1}{Z}e^{-\beta H(\hbox{{\boldmath $ \sigma$}})}.
\end{equation}
The shaken dynamics on $\cX_{V}$ corresponds to an alternate dynamics on $G^b$ in the following sense
\begin{equation}\label{eqshalt}
P^{sh}(\sigma^{(1)},\tau^{(1)})=\sum_{\tau^{(2)}\in \{-1,+1\}^{V^{(2)}}}P^{alt}(\hbox{{\boldmath $ \sigma$},{\boldmath $ \tau$}})
\end{equation}
with
\begin{equation}\label{Palt}
P^{alt}(\hbox{{\boldmath $ \sigma$},{\boldmath $ \tau$}})=
\frac{e^{-\beta H(\sigma^{(1)},\tau^{(2)})}}{\overrightarrow{Z}_{\sigma^{(1)}}}\frac{e^{-\beta H(\tau^{(1)},\tau^{(2)})}}{\overleftarrow{Z}_{\tau^{(2)}}}
\end{equation}
 the stationary measure of $P^{alt}$ is ${{\pi^b}}(\hbox{\boldmath $ \sigma$})$.
 Note that $P^{alt}$ is independent of $\sigma^{(2)}$.
This dynamics is in general non reversible.
\end{theorem}

%-------------------------------------------------
%  SHAKEN ON Z2
%-------------------------------------------------
\section{The shaken dynamics on \texorpdfstring{$\mathbb{Z}^2$}{Z\^2}}
\label{Z2}

Let $\Lambda$ be a two-dimensional $L\times L$ square lattice in $\mathbb{Z}^2$ and
let ${\cal B}_\Lambda$  denote the set of all nearest neighbors
in $\Lambda$ with periodic boundary conditions.

In $\Lambda$ we identify a set $B$ where the value of the spins is frozen throughout the evolution and that plays the role of boundary conditions. This means that we will
consider the state space $\cX_{\Lambda,B}=\{\sigma\in\cX_\Lambda:\; \sigma_x=+1\quad \forall x\in B\}$.

\begin{figure}
    \centering{
        \includegraphics[width=0.45\linewidth]{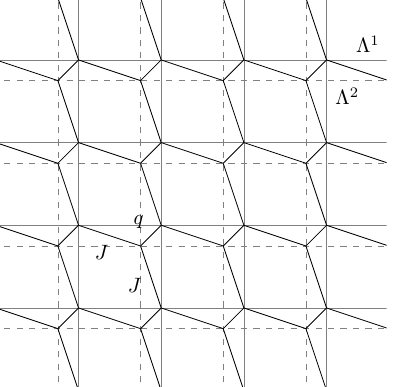}
    }
    \caption{The doubling graph of $\ZZ^{2}$ 
        represented in the figure turns out 
        to be a hexagonal lattice.}
    \label{fig:square_shaken}
\end{figure}

Following the construction of the shaken dynamics of the previous section we can define
\begin{align}\label{ham2}
\begin{aligned}
H(\sigma,\tau) & =
 - \sum_{x \in \Lambda} \left[J \sigma_x(\tau_{x^{\uparrow}} + \tau_{x^{\rightarrow}}) +q\sigma_x \tau_x+ \lambda(\sigma_x+\tau_x)\right]   \\
 & = 
 - \sum_{x \in \Lambda}\left[J \tau_x(\sigma_{x^{\downarrow}} + \sigma_{x^{\leftarrow}})  +q\tau_x\sigma_x + \lambda(\sigma_x+\tau_x)\right]
\end{aligned}
\end{align}
where $x^{\uparrow}, x^{\rightarrow}, x^{\downarrow}, x^{\leftarrow}$ are, respectively, the up, right, down, left neighbors of the site $x$ on the torus $(\Lambda,{\cal B}_\Lambda)$, $J>0$ is the ferromagnetic interaction,  $q>0$ is the inertial constant and $\lambda$ represents the  external field.
We can write
\begin{align}\label{hamasym}
H(\sigma,\tau)=
 - \sum_{x \in \Lambda}\sigma_x h^{ur}(\tau) - \lambda  \sum_{x \in \Lambda}\tau_x=- \sum_{x \in \Lambda}\tau_x h^{dl}(\sigma) - \lambda  \sum_{x \in \Lambda}\sigma_x
\end{align}
  where the local up-right field  $h^{ur}_x(\tau)$ due to the configuration $\tau$ is given by
\begin{equation}\label{hur}
h^{ur}_x(\tau)=\Big[J(\tau_{x^{\uparrow}} + \tau_{x^{\rightarrow}}) +q\tau_x + \lambda\Big]
\end{equation}
and  the local down-left field  $h^{dl}_x(\sigma)$ due to the configuration $\sigma$ is given by
\begin{equation}\label{hdl}
h^{dl}_x(\sigma)=\Big[J(\sigma_{x^{\downarrow}} + \sigma_{x^{\leftarrow}})  +q\sigma_x + \lambda\Big]
\end{equation}
Define the asymmetric updating rule
\begin{equation}\label{pasymm}
P^{dl}(\sigma,\tau):=\frac{e^{-\beta H(\sigma,\tau)}}{\overrightarrow{Z}_\sigma}\quad \hbox{ with } \quad \overrightarrow{Z}_\sigma=\sum_{\sigma'\in\cX_{\Lambda,B}}e^{-\beta H(\sigma,\sigma')}
\end{equation}
Due to the definition of the pair Hamiltonian, the updating performed by the
transition probability $P^{dl}(\sigma,\tau)$ is parallel: given a configuration $\sigma$,
at each site $x\in \Lambda$ the spin $\tau_x$ of the new configuration $\tau$ is chosen with a probability proportional
to $e^{\beta h^{dl}_x(\sigma)\tau_x}$ so that
$$
P^{dl}(\sigma,\tau):=\frac{e^{-\beta H(\sigma,\tau)}}{\overrightarrow{Z}_\sigma}=\prod_{x\in\Lambda}\frac{e^{\beta h^{dl}_x(\sigma)\tau_x}}{2\cosh{\beta h^{dl}_x(\sigma) }}
$$
We have
$H(\sigma,\tau)\not=H(\tau,\sigma)$ and actually, by (\ref{hamasym}), $H(\tau,\sigma)$ corresponds to the opposite direction of the interaction for the
transition from $\sigma$ to $\tau$. We define
\begin{equation}\label{pasymm1}
 P^{ur}(\sigma,\tau):=\frac{e^{-\beta H(\tau,\sigma)}}{\overleftarrow{Z}_{\sigma}}\quad\hbox{ with }\quad  \overleftarrow{Z}_\sigma=\sum_{\sigma'\in\cX_{\Lambda,B}}e^{-\beta H(\sigma',\sigma)}
\end{equation}

Similarly for $P^{ur}(\sigma,\tau)$ with the up-right field $h^{ur}_x(\sigma)$
we get
$$
P^{ur}(\sigma,\tau):=\frac{e^{-\beta H(\tau,\sigma)}}{\overleftarrow{Z}_{\sigma}}=\prod_{x\in\Lambda}\frac{e^{\beta h^{ur}_x(\sigma)\tau_x}}{2\cosh{\beta h^{ur}_x(\sigma) }}
$$

Note that in the definition of $H(\sigma,\tau)= - \sum_{x \in \Lambda}\tau_x h^{dl}(\sigma) - \lambda  \sum_{x \in \Lambda}\sigma_x$ the last term could be canceled obtaining the same value for the transition
probability $P^{dl}(\sigma,\tau)$. However we added it in the pair Hamiltonian for symmetry reasons: in particular the fact that $H(\tau,\sigma)$ is the
correct pair Hamiltonian to define $P^{ur}(\sigma,\tau)$ is due to this symmetry.
Note also that
 $$H(\sigma,\sigma) = H(\sigma)-q|\Lambda|$$
 where we define $H(\sigma)$ to be the usual Ising Hamiltonian with magnetic field $2\lambda$
\begin{equation}\label{HGibbs}
H(\sigma)=-\sum_{\{ x,y\}\in{\cal B}_\Lambda}J\sigma_x\sigma_y -2\lambda  \sum_{x \in \Lambda}\sigma_x
\end{equation}

We define 
\begin{equation}
P^{sh}(\sigma,\tau)=\sum_{\sigma'\in\cX_{\Lambda,B}}P^{dl}(\sigma,\sigma')P^{ur}(\sigma',\tau)=\sum_{\sigma'\in\cX_{\Lambda,B}}
\frac{e^{-\beta H(\sigma,\sigma')}}{\overrightarrow{Z}_\sigma}\frac{e^{-\beta H(\tau,\sigma')}}{\overleftarrow{Z}_{\sigma'}}
\end{equation}

Reversing the order of the ``down--left'' and the ``up--right'' updating rules one would obtain the chain with transition probabilities
$$
P^{sh'}(\sigma,\tau)=\sum_{\sigma'\in\cX_{\Lambda,B}}P^{ur}(\sigma,\sigma')P^{dl}(\sigma',\tau).
$$
Clearly, by choosing a different orientation  instead of down-left and up-right in $\ZZ^{2}$, a different pair Hamiltonian can be obtained
with a resulting different graph for the interaction. 

In this square case we could have directly used the alternate dynamics, since $\ZZ^2$ is
already a bipartite graph.
Indeed we can consider the checkerboard splitting of the sites in $\Lambda=V^{(1)}\cup V^{(2)}$, in black and white sites, with
$|V^{(1)}|=|V^{(2)}|=|V|=|\Lambda|/2$. Black
sites interact only with white sites and viceversa with the usual Ising Hamiltonian
\begin{align*}
H(\hbox{{\boldmath $ \sigma$}}) 
	& \equiv H(\sigma^{(1)},\sigma^{(2)})\\
	& = -\sum_{x\in V^{(1)}} \Big(\sigma^{(1)}_x h_x^{2\to 1}(\sigma^{(2)})+\lambda_x\sigma^{(2)}_x\Big)\\
	& = -\sum_{x\in V^{(2)}} \Big(\sigma^{(2)}_x h_x^{1\to 2}(\sigma^{(1)})+\lambda_x\sigma^{(1)}_x\Big).
\end{align*}

By Theorem \ref{t0} we immediately obtain that the invariant measure of
the alternate dynamics is the Gibbs measure $\pi^G(\hbox{{\boldmath $ \sigma$}})=e^{-\beta H(\hbox{{\boldmath $ \sigma$}})}/Z$.  The idea of alternate dynamics on even and odd sites is already present in the literature (see \cite{cirillo2002note}).

\subsection{Relation with the Gibbs measure}
\label{rel_G}

Remaining in $\Lambda\in\ZZ^2$ with $J>0$ and $B=\emptyset$, 
i.e. with the standard periodic boundary conditions, 
we denote the invariant measure of the shaken dynamics
$\pi_{\Lambda} =\frac{\overrightarrow{Z}_\sigma}{Z}$.

Let $\pi_{\Lambda}^G$ be the Gibbs measure
\begin{equation}\label{piG}
\pi_{\Lambda}^G(\sigma)=\frac{e^{-\beta H(\sigma)}}{Z^G}\hskip 1cm
{\rm{with}} \hskip 1cm Z^G=\sum_{\sigma\in\cX_{\Lambda}}e^{-\beta H(\sigma)}
\end{equation}
with $H(\sigma)$ defined in (\ref{HGibbs}) and we define  the total variation distance, or $L_1$ distance,
between two arbitrary probability measures $\mu$ and $\nu$ on $\cX_{\Lambda, B}$ as
\begin{equation}\label{dist}
\| \mu-\nu\|_{TV}=\frac{1}{2}\sum_{\sigma\in \cX_{\Lambda, B}}|\mu(\sigma)-\nu(\sigma)|
\end{equation}
In the following Theorem~\ref{t2} we control the distance between the invariant measure of the shaken
dynamics and the Gibbs measure
at low temperature (large $\beta$).
We notice that this theorem is an  extension of Theorem~1.2 in \cite{pss1} to the case of Hamiltonian with the non zero external field. 
This result could be extended to the case $B \neq \emptyset$.
\begin{theorem}\label{t2}
Set $\delta=e^{-2\beta q}$, and let $\delta$ be such that
\begin{equation}\label{conddelta}
\lim_{|\Lambda|\to\infty}\delta^2|\Lambda|=0
\end{equation}
Under the assumption (\ref{conddelta}), there exists $\bar{J}$  such that  for any $J>\bar{J}$
\begin{equation}\label{th2}
\lim_{|\Lambda|\to\infty}\| \pi_{\Lambda}-\pi_{\Lambda}^G\|_{TV}=0.
\end{equation}
\end{theorem}

\subsection{Convergence to equilibrium: a comparison}
\label{comp}
For the Ising model on the finite $L \times L$ box $\Lambda\subset \ZZ^2$ with periodic boundary conditions, consider now the regime 
$0 < \lambda < J$ and
very low temperature, i.e. $\beta\gg 1$. This is usually called ``metastable regime''.  Indeed the 
 configuration $-\mathbb{1}$
with all spins $-1$ represents, in this regime of low temperature, a metastable state, while the configuration $+\mathbb{1}$ with all positive spins, parallel to the external field $\lambda$, represents the stable state, where the Gibbs measure concentrates in the
limit $\beta\to\infty$.
We will call tunneling
    time $\tau_{+\mathbb{1}}$ 
the first hitting time to $+\mathbb{1}$ starting from $-\mathbb{1}$.

We compare here the tunneling times for different dynamics: the single spin flip (SSF) dynamics,
the PCA dynamics and the shaken dynamics (Sh) with $q < \lambda$.

More precisely consider,
for each $x \in \Lambda$, the local fields
\begin{align*}
    h^{SSF}_x(\sigma) & = \Big[\frac{J}{2}(\sigma_{x^{\uparrow}} + \sigma_{x^{\rightarrow}}+\sigma_{x^{\downarrow}} + \sigma_{x^{\leftarrow}})  + 2\lambda\Big] \\
    h_x^{PCA}(\sigma) & = \Big[\frac{J}{2}(\sigma_{x^{\uparrow}} + \sigma_{x^{\rightarrow}}+\sigma_{x^{\downarrow}} + \sigma_{x^{\leftarrow}})  +q\sigma_x + \lambda\Big]
\end{align*}
$h^{dl}_x(\sigma)$ and $h^{ur}_x(\sigma)$ defined in (\ref{hdl}) and (\ref{hur}),
and the local transition probabilities
$$
p^*_x(\sigma,\tau):=\frac{e^{\beta h^*_x(\sigma)\tau_x}}{2\cosh \beta h^*_x(\sigma)},\qquad *=SSF, PCA, dl,ur.
$$
Denote by $\sigma^x$  the configuration 
obtained from $\sigma$ by flipping the
spin at site $x$ and by $\indicator{A}$ the 
indicator function of event $A$. Then, the three dynamics under consideration, have
the following transition probabilities:

\begin{eqnarray*}
     P^{SSF}(\sigma,\tau) & = & \frac{1}{|\Lambda|}p^{SSF}_x(\sigma,\tau) \indicator{\tau = \sigma^x} \\
     P^{PCA}(\sigma,\tau) & = &\prod_{x\in \Lambda}p^{PCA}_x(\sigma,\tau)\\
     P^{Sh}(\sigma,\tau) & = & \sum_{\sigma'\in\cX_{\Lambda}}\prod_{x\in\Lambda}p^{dl}_x(\sigma,\sigma')
    \prod_{x\in\Lambda}p^{ur}_x(\sigma',\tau).
\end{eqnarray*}
\bigskip
The SSF dynamics is reversible with Gibbs invariant measure
$$\pi^G_{\Lambda}(\sigma)=\frac{e^{\beta\sum_x h^{SSF}_x(\sigma)\sigma_x}}{Z^G}$$
and the invariant measure of
PCA and shaken dynamics are, respectively,
\begin{equation*}
    \pi_{PCA}(\sigma) = \frac{\sum_{\tau} e^{\beta \sum_{x} h^{PCA}_{x}(\sigma) \tau_x}}{Z^{PCA}}
    \quad \text{ and }\quad
    \pi(\sigma) = \frac{\sum_{\tau} e^{\beta \sum_{x} h^{dl}_{x}(\sigma) \tau_x}}{Z}.
\end{equation*}

Note that, in the case $\frac{J}{2}=q$, the dynamics defined via the transition probabilities $P^{PCA}$ turns out to be the same as the \emph{cross} PCA dynamics introduced in \cite{cirillo2008metastability} and it is possible to show that the measure $\pi_{PCA}$ and the Gibbs-like measure with Hamiltonian $H(\sigma)$ defined in \cite[Equation~(2.5)]{cirillo2008metastability} coincide. 

All these measures,
in the regime of large $\beta$, concentrate on the stable state $+\mathbb{1}$. This is proven for $\pi(\sigma)_{PCA}$ in \cite{cirillo2008metastability} and it is immediate
for the Gibbs measure. 
For the shaken dynamics this follows by noting that,
for all $\sigma \neq +\mathbb{1}$, 
\begin{equation}
    \frac{\pi(\sigma)}{\pi(+\mathbb{1})} 
    = \prod_{x \in \Lambda} \frac{
        2 \cosh{\beta h_x^{dl}(\sigma)}
    } {
        2 \cosh{\beta [2J + q + \lambda]}
    } \le e^{-2 \beta q}
\end{equation}
and, hence
\[
    \lim_{\beta \to \infty} \frac{\pi(\sigma)}{\pi(+\mathbb{1})} = 0.
\]

For  large inverse temperature $\beta$ we have $p^*_x(\sigma,\tau)\sim 1$ if $\tau_x$ is parallel to the local field $h^*_x(\sigma)$. We call
such a local move ``along the drift''. 
On the other hand $p^*_x(\sigma,\tau)\sim e^{-2\beta |h^*_x(\sigma)|}$ if $\tau_x$ is anti-parallel to the local field $h^*_x(\sigma)$. We call
such a local move ``against the drift''. 

Let $\lambda < J$. For the \emph{SSF} dynamics we have for any $\delta>0$ (see for instance  \cite{OV,bovier2016metastability}):
\begin{equation}\label{t_SSF}
\lim_{\beta\to\infty}P^{SSF}_{-\mathbb{1}}(\tau_{+\mathbb{1}}> e^{\beta (\Gamma-\delta)})=1
\end{equation}
with 
$$
\Gamma=4J\ell_c-2\lambda \ell_c^2+2\lambda (\ell_c-1)
$$
and critical size $\ell_c=\big[\frac{J}{\lambda}\big]+1$, where $\big[\cdot\big]$
denotes the integer part.
The typical exit paths from the metastable state $-\mathbb 1$ follow a sequence of growing
squares and rectangles (quasi squares) of plus spins up to the critical size $\ell_c$. Starting from a rectangular
droplet of plus spins a move against the drift is necessary to create a new line, and the line  is 
completed with subsequent moves along the drift. 

A similar result for the \emph{PCA} dynamics is proven in \cite{cirillo2008metastability} in the case $q=\frac{J}{2}$.
In particular, the typical exit paths follow the same growth mechanism of the SSF dynamics since moves along the drift lead to rectangular droplets of plus spins and parallel updates against the drift
do not take place
with high probability for $\beta$ large (see \cite[Section~2.7]{cirillo2008metastability} for a detailed discussion).
\bigskip

A different growth takes place in the case of \emph{shaken dynamics}. Indeed using an argument inspired by \cite{dss2} it is simple to prove that configurations with complete
diagonals of plus spins can be used to construct a efficient way to go from the 
metastable to the stable state.
 
 We have the following
 
\begin{theorem}\label{t21}
For any $\delta>0$ and $0 < q < \lambda < J$ 
\begin{equation*}
\lim_{\beta\to\infty}P^{Sh}_{-\mathbb{1}}(\tau_{+\mathbb{1}}< e^{2\beta [(2J + q-\lambda)+\delta]})=1.
\end{equation*}
\end{theorem}

This means that the crossover takes place for the shaken dynamics, typically,
within a time corresponding to the 
time it takes, for the SSF and PCA,
to flip the first spin to $+1$. 

Moreover we can define an event characterizing the tube of typical 
paths from the metastable state $-\mathbb 1$
to the stable state $+\mathbb{1}$.
Actually we prove Theorem \ref{t21} by using this event.

We need some notation. Denote by $X^{Sh}_t$ the evolution of the shaken dynamics and let
$\tau_{-\mathbb{1}^c}=\inf\{t>0:X^{Sh}_t\not= {-\mathbb{1}}\} $ be the time
of the first spin flip, starting from $-\mathbb{1}$. We denote a site $x\in\Lambda$ as $x=(i,j)$ with $i,j=1,...,L$ and for each $k=1,...,L$
we define a diagonal $D_k\subset\Lambda$ as the set of sites forming a diagonal in the up-left to down-right direction on the torus 
(i.e., with periodic boundary condition):
\[
    D_k:=\{x=(i,j)\in \Lambda \hbox{ such that } (i+j-1) \equiv k \, (\mathrm{mod}\;L) \}
\]
 
Let
${\cal{D}}$ be the set of configurations 
where each up-left to down-right diagonal has
all spins with the same sign:
\[
    {\cal{D}}:=\{\sigma\in{\cal{X}}:\;\forall \, k=1,...,L \quad \sigma_x=\sigma_y\;  \hbox{ for any } x, y\in D_k\}.
\]
 Note that $-\mathbb{1}, +\mathbb{1}\in \cal{D}$.
 For notation convenience we identify a configuration  $\sigma\in{\cal{D}}$ with the set of indices of diagonals with positive spins,
 i.e., with a subset of $\{1,2,...,L\}$ denoted by $I_\sigma$. We say that $m\not\in I_\sigma$ is {\it nearest neighbor of} $I_\sigma$
 if there exists $n\in I_\sigma$ with $|n-m|=1$.
 Given $\sigma,\sigma'\in\cal{D}$ we say that $\sigma'$ {\it increases} $\sigma$ if  $I_{\sigma'}\supset I_\sigma$ and their difference $I_{\sigma'}\backslash I_\sigma$ is a single integer $m$ nearest neighbor of $I_\sigma.$
 
 Starting from $-\mathbb{1}$ define the sequence of strong times $\{s_i, t_i\}_{i=1,2,\ldots}$ as follows: $t_0=0$ and for any $i=1,2,..$
 \[
    s_i=\inf\{t>t_{i-1};\; X^{Sh}_t\not\in{\cal{D}}\},\qquad t_i=\inf\{t>s_i;\; X^{Sh}_t\in{\cal{D}}\}.
 \]
Fix an arbitrarily small positive $\delta$ and define $T_\delta=e^{2\beta [(2J + q-\lambda)+\delta]}$ and the following event
\begin{align*}
    {\cal{T}}:= 
        \big\{ & s_1=\tau_{-\mathbb{1}^c} < T_{\delta/2},\; 
                 t_L=\tau_{+\mathbb{1}} < T_\delta,\; 
                 \{ X^{Sh}_{t_i}\}_{i=1...L} \\
               & \hbox{ is a sequence of increasing configurations in }  
                 {\cal{D}}
            \big\}
\end{align*}

We have
\begin{theorem}\label{t22}
For any $\delta>0$
\[
    \lim_{\beta\to\infty}P^{Sh}_{-\mathbb{1}}({\cal{T}})=1.
\]
\end{theorem}

Clearly the event ${\cal{T}}$ implies the event $\tau_{+\mathbb{1}}<T_\delta$, so Theorem~\ref{t21} is established by proving Theorem~\ref{t22}.

The asymmetric nature of the interaction
gives the shaken dynamics
a higher mobility with respect to its
symmetric counterpart (``standard'' PCA).
This is the reason of shorter tunneling times.
Evidence of this fact,
in the regime $0 < q < \lambda < J$,
is provided
by the numerical experiment whose results are summarized in Figure~{\ref{fig:metastable_nucleation}}.
For both the PCA and the shaken dynamics
the simulations are started in configuration
$-\mathbb{1}$ and the value of the average
magnetization is tracked.
Simulations are run for several values of
the inverse temperature $\beta$.
It is immediately clear that, for the shaken dynamics,
the tunneling towards a state is much faster.
Note also that this higher mobility causes a slightly smaller magnetization at equilibrium
for finite $\beta$.

A similar behavior has been highlighted in \cite{LS, LSchap}
where a comparison between the symmetric
PCA and an irreversible PCA with totally asymmetric interaction has been performed in the case $\lambda = 0$.
In the case of null magnetic field,
if the value of the inverse temperature
$\beta$ is \emph{large} but finite, the
system exhibits, both in the case of PCA dynamics and shaken dynamics, a (positive or negative) spontaneous magnetization. 
On short time scales the system stays in one
``phase'' (e.g. it is negatively magnetized)
whereas, on longer time scales, it \emph{tunnels} rapidly towards the other phase.
However, since the value of the spontaneous magnetization is neither exactly $-1$ or $+1$,
choosing the right parameters to compare the two dynamics, as far as tunneling time is concerned, may be a somewhat delicate matter.
To make a fair comparison of the tunneling times
we tuned the parameters of the
PCA and the shaken dynamics so to have similar values for the (average) spontaneous magnetization. The results of the experiment
are summarized in Figure \ref{fig:comparison_mag_pca_shaken} where
the average magnetization over time is represented for both dynamics. It is
evident that, also in this case, 
the shaken dynamics 
has considerably more mobility than the PCA.
As far as the tunneling behavior is concerned, the shaken dynamics retains, essentially, the same features of the irreversible PCA considered in \cite{LSchap}. However, it is to remark that, in the case of the shaken dynamics, reversibility allows to manage the invariant measure more easily.

\begin{figure}
    \centering
    \subfigure[$\beta=1.3, q = 0.1, \lambda = 0.15$]
    {\includegraphics[width=0.48\textwidth]{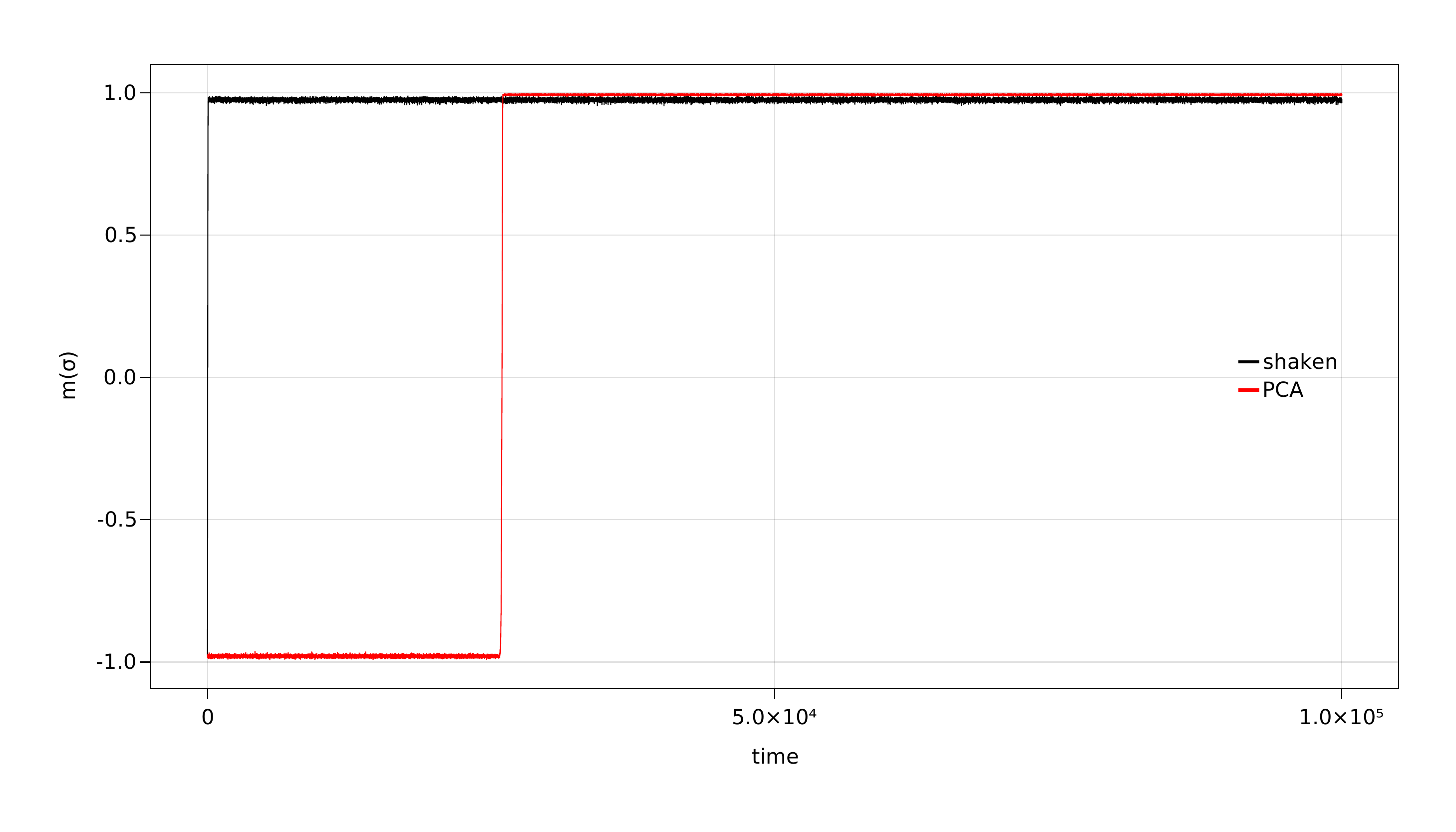}} \;
    \subfigure[$\beta=2.3, q = 0.1, \lambda = 0.15$]
    {\includegraphics[width=0.48\textwidth]{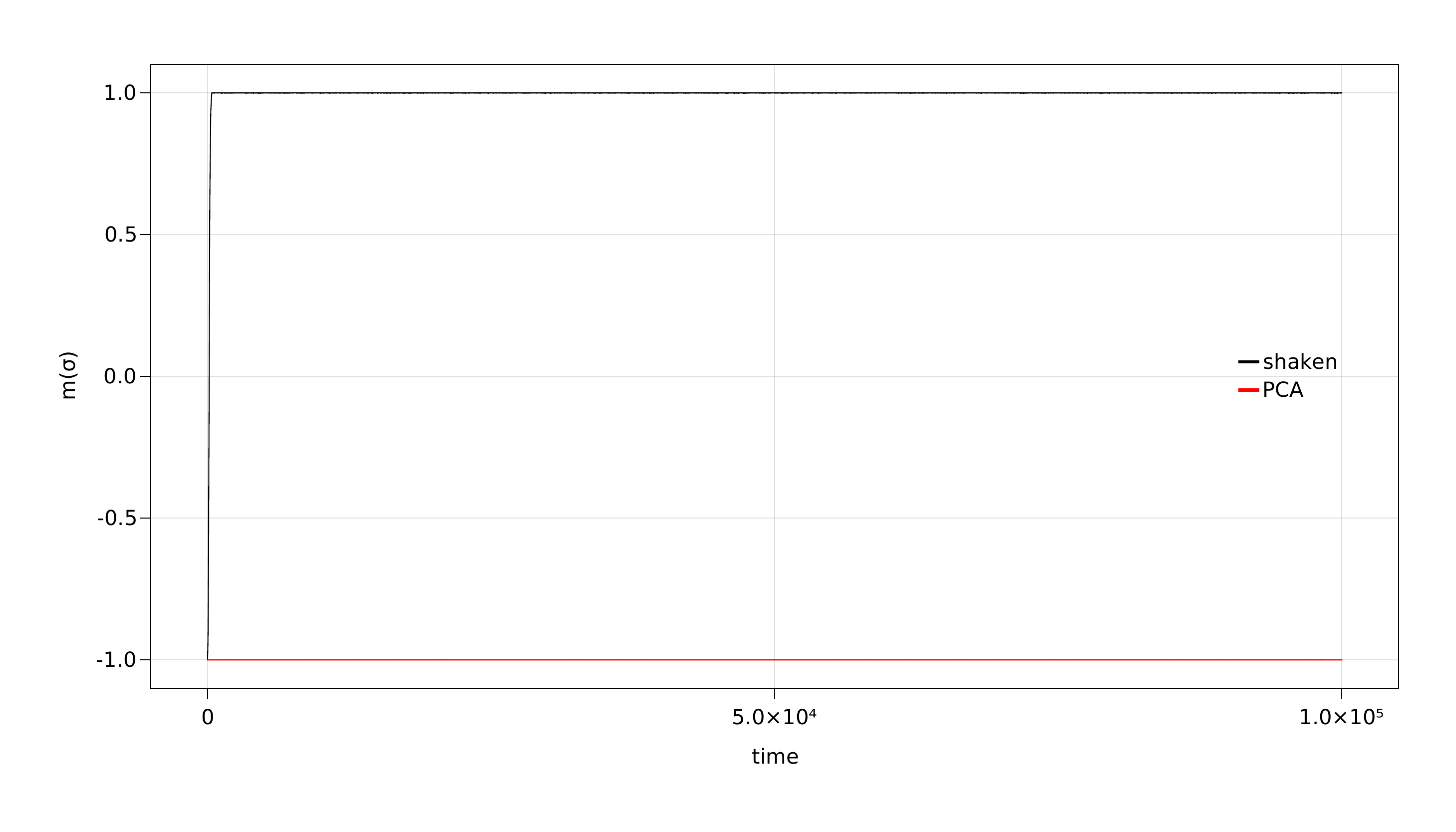}} \\
    \subfigure[$\beta=3.3, q = 0.1, \lambda = 0.15$]
    {\includegraphics[width=0.48\textwidth]{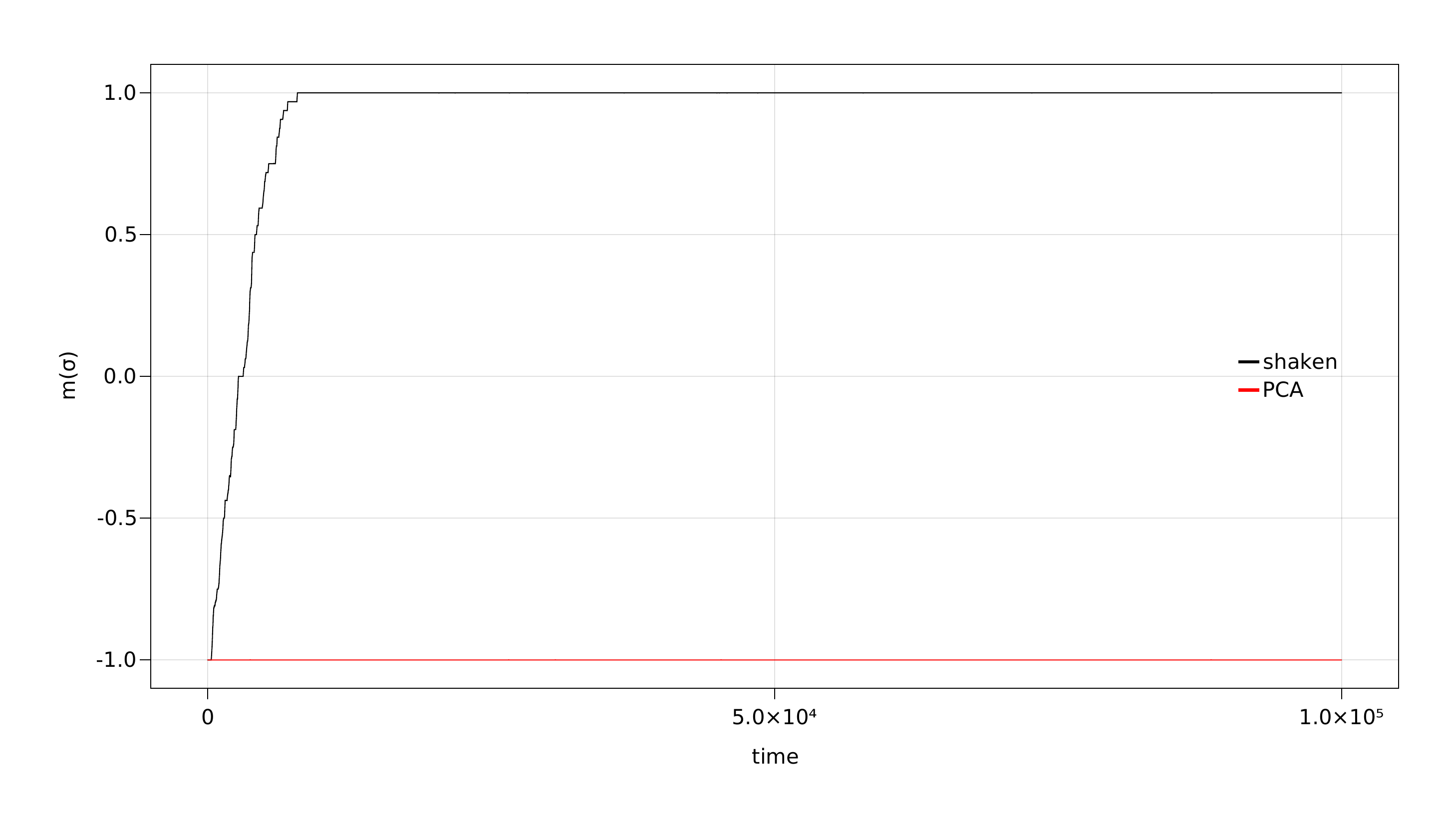}} \;
    \subfigure[$\beta=4.0, q = 0.1, \lambda = 0.15$]
    {\includegraphics[width=0.48\textwidth]{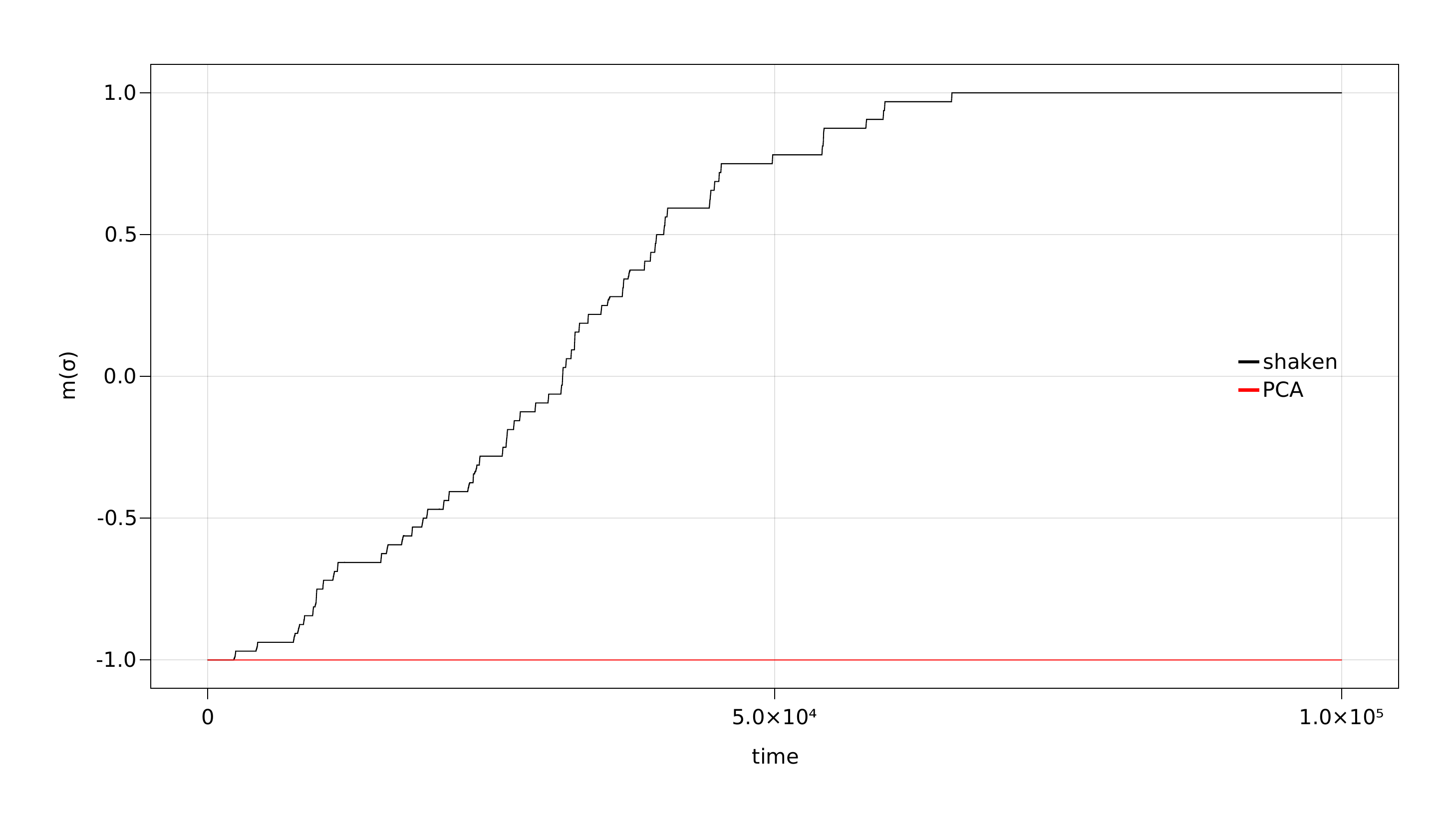}}
\caption{Comparison of the magnetization over time for PCA and shaken
dynamics for growing values of the inverse tempeature $\beta$.}
\label{fig:metastable_nucleation}
\end{figure}

\begin{figure}
    \centering
    \includegraphics[width=\textwidth]{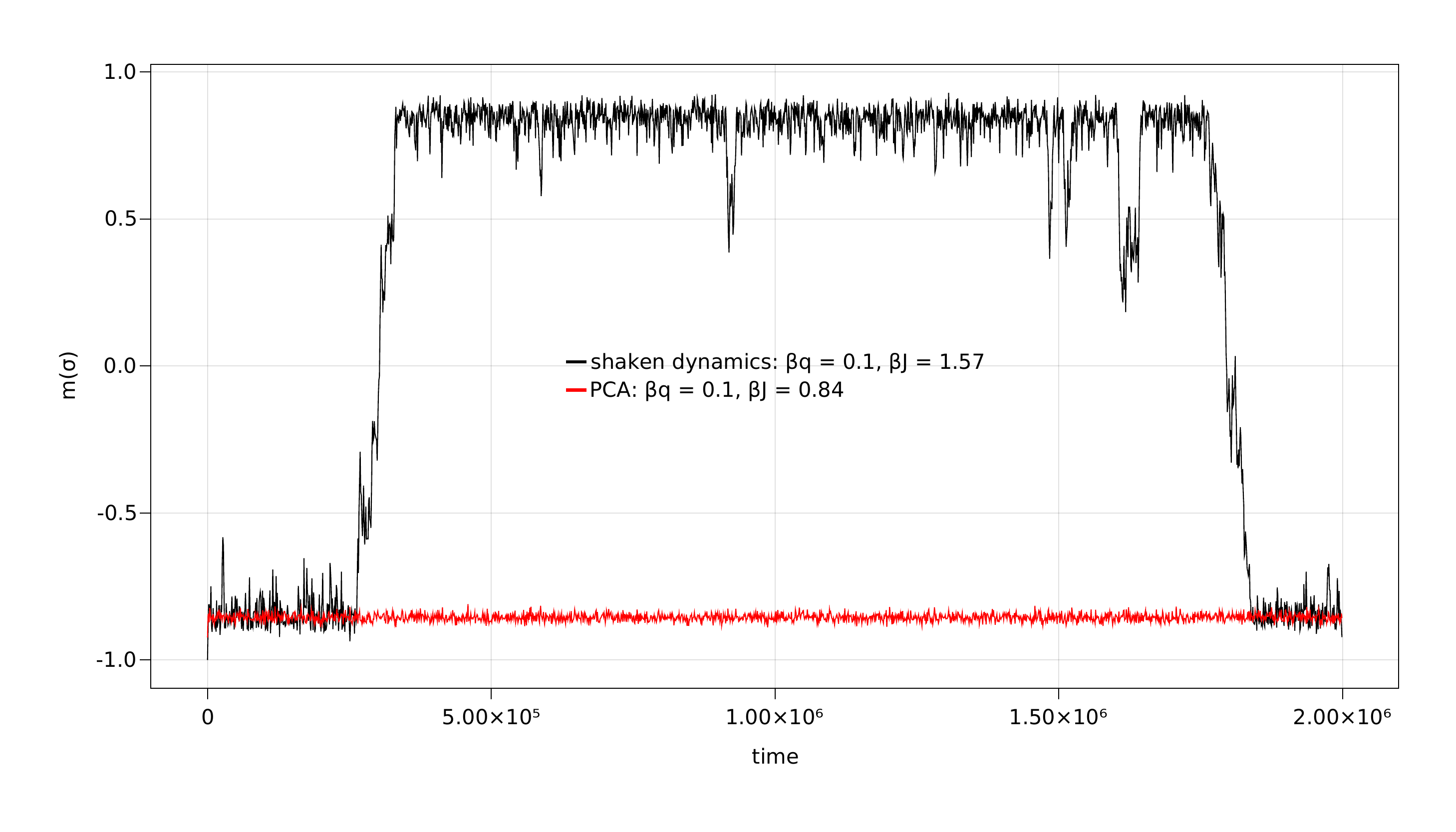}
    \caption{Comparison of the evolution of
    the magnetization for a spin system evolving
    according to a shaken dynamics (black) and according
    to a symmetric PCA (red). The values of the parameters
    are such that both dynamics exhibit the same spontaneous magnetization and are consistent
    with those of \cite{LSchap}.}
    \label{fig:comparison_mag_pca_shaken}
\end{figure}

%-------------------------------------------------
%  GEOMETRICAL DISCUSSION
%-------------------------------------------------
\section{Geometrical discussion}
\label{geo_disc}

In the shaken dynamics the idea of alternate dynamics is combined with that of the doubling Hamiltonian.
Indeed considering only part of the interaction (for instance down-left first and then up-right in the case of $\Lambda\in\ZZ^2$  
presented at the beginning of the Section) and introducing the inertial parameter of self interaction $q$ it is possible to interpolate 
between different lattice geometries induced by the 
doubling graph as already discussed
in \cite{apollonio2019criticality}. 

Indeed the alternate dynamics on the hexagonal lattice 
makes possible to interpolate between the square ($q\to\infty$) and the 1-dimensional lattice ($q\to 0$).
The interpolation between lattices induced by the shaken dynamics may be applied in general, and in the case of planar
graphs the results concerning the critical behavior contained in \cite{apollonio2019criticality} can be extended, using  \cite{cimasoni2013critical}. 

Consider for instance the Ising model on the triangular lattice. On this lattice we divide the 6 nearest neighbors of each vertex $x$ into two sets, e.g. $\ell (x)$ left and $r(x)$ right nearest neighbors of $x$, and define a shaken dynamics with self interaction $q$. Hence the doubled Hamiltonian is
$$
H^{\triangle}(\sigma,\tau)=-\sum_x\Big[\sum_{y\in\ell (x)}\big(J\sigma_y\tau_x\big)+q\sigma_x\tau_x\Big]=-\sum_x\Big[\sum_{y\in r (x)}\big(J\tau_y\sigma_x\big)+q\sigma_x\tau_x\Big]
$$
The corresponding alternate dynamics turns out to be defined on the square lattice (see Fig.~\ref{fig:triangular_shaken})
with invariant measure the Gibbs one. In particular the square lattice is regular when we set $J=q$. In this case the parameter $q$ can be used to move through different geometries. The triangular lattice ($q\to\infty$) and the hexagonal lattice ($q = 0$) can be derived from the original square lattice just tuning the value of $q$.
A more precise statement of this interpolation is given by the following
\begin{theorem}\label{th:critical_curve_triangular_lattice}
    For the shaken dynamics on the triangular lattice the critical equation relating the parameters $J$ and $q$ is given by
    \begin{align}
        1 + \tanh^3(J)\tanh(q) = 3\tanh(J)\tanh(q) + 3\tanh^2(J)
    \end{align}
\end{theorem}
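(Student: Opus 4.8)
The plan is to reduce the statement to the determination of the critical point of a ferromagnetic, zero-field Ising model on a periodic planar lattice, and then to identify that point via the theory of the critical two-dimensional Ising model on isoradial graphs.

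By Theorem~\ref{t0}, applied to the triangular lattice with the left/right splitting $\ell(x)\cup r(x)$ of the six neighbours and the doubled Hamiltonian $H^\triangle(\s,\t)$, the invariant measure of the shaken dynamics is the marginal on one copy of the Gibbs measure $\pi^b$ on the doubling graph $G^b$; since a marginal of a ferromagnetic Gibbs measure has long-range order exactly when the measure itself does, the critical equation of the shaken dynamics coincides with that of the Ising model on $G^b$. The lattice $G^b$ is planar and doubly periodic: the three-three splitting makes every vertex incident to exactly one self-interaction edge of weight $q$ and three interaction edges of weight $J$, and (one checks) every face is a quadrilateral bounded by exactly one $q$-edge and three $J$-edges. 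This is the ``doubled triangular lattice'' of Fig.~\ref{fig:trianguar_shaken}: the square lattice in which the $q$-edges form a perfect matching and the $J$-edges its complement. As built-in consistency anchors, contracting the $q$-edges ($q\to\infty$) gives the uniform triangular lattice, deleting them ($q\to0$) gives the hexagonal lattice, and $q=J$ gives the regular square lattice; the equation in the statement must — and, on substituting $\tanh q\to1$, $\tanh q\to0$ and $\tanh q=\tanh J$, does — collapse to Wannier's critical values $\tanh J_c=2-\sqrt3$, $\tanh J_c=1/\sqrt3$ and $\tanh J_c=\sqrt2-1$ respectively.

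For the main step I would invoke the criticality criterion for the two-dimensional Ising model on isoradial graphs; this is where \cite{CDC13} enters, carrying the computation of \cite{critic} from the $\Z^2$-based doubled lattices to the triangular-lattice-based ones. The lattice $G^b$ admits a one-parameter family of isoradial (rhombic) embeddings in which all $q$-edges share a common half-angle $\theta_q$ and all $J$-edges a common half-angle $\theta_J$, constrained by the vertex/face closing relation $\theta_q+3\theta_J=\pi$ (one $q$-edge and three $J$-edges around each vertex, likewise around each face); on such an embedding the couplings are critical precisely when $\tanh q=\tan(\theta_q/2)$ and $\tanh J=\tan(\theta_J/2)$. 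Eliminating the angles, set $t=\tanh J=\tan(\theta_J/2)$, so that $\theta_q/2=\pi/2-3\theta_J/2$ and hence $\tanh q=\cot(3\theta_J/2)$, i.e. $\tanh q\,\tan(3\theta_J/2)=1$; the tangent triple-angle identity $\tan 3\varphi=(3\tan\varphi-\tan^3\varphi)/(1-3\tan^2\varphi)$ then turns this into $\tanh q\,(3t-t^3)=1-3t^2$, which rearranges to exactly $1+\tanh^3(J)\tanh(q)=3\tanh(J)\tanh(q)+3\tanh^2(J)$.

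The trigonometric elimination and the three limiting checks are routine; the real content, and the main obstacle, is the geometric input of the last step: verifying that $G^b$ — the square lattice with one distinguished perfect matching of bonds — genuinely admits the claimed one-parameter family of isoradial embeddings, extracting that the two edge classes carry well-defined half-angles obeying $\theta_q+3\theta_J=\pi$, and checking that the hypotheses of \cite{CDC13} apply so that the critical couplings on such an embedding are precisely $\tanh J_e=\tan(\theta_e/2)$ on each edge $e$. An alternative route that avoids isoradiality is to locate the critical point of the Ising model on $G^b$ directly from its Kac--Ward / transfer-matrix determinant, or to combine Kramers--Wannier duality — which maps $G^b(q,J)$ to a lattice of the same type with dual couplings — with a single star--triangle move; either way one reaches the same equation at the cost of the same combinatorial bookkeeping.
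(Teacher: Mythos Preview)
Your computation is correct and the equation you arrive at matches, but the route is genuinely different from the paper's. The paper applies the even-subgraph form of Theorem~1.1 in \cite{CDC13} directly: on the fundamental domain of $G^b$ (two vertices, four edges: one of weight $q$ and three of weight $J$) one simply lists the eight even subgraphs, observes that the empty graph and the full graph lie in $\cE_0$ while the six two-edge subgraphs lie in $\cE_1$, and reads off
\[
1+\tanh^3(J)\tanh(q)=3\tanh(J)\tanh(q)+3\tanh^2(J)
\]
from $\sum_{\gamma\in\cE_0}\prod_{e\in\gamma}\tanh J_e=\sum_{\gamma\in\cE_1}\prod_{e\in\gamma}\tanh J_e$. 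No geometric embedding is needed; the criterion holds for any finite planar non-degenerate doubly periodic weighted graph, and the enumeration is a finite check.

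Your isoradial route---deriving the angle constraint $\theta_q+3\theta_J=\pi$ from the vertex/face structure and then eliminating via the tangent triple-angle identity---is elegant and makes the geometric interpolation (square, hexagonal, triangular limits) completely transparent. The cost is exactly what you flag: one must verify that $G^b$ admits the claimed family of isoradial embeddings, that both the vertex and the face angle conditions reduce to the single relation $\theta_q+3\theta_J=\pi$ (this needs every face, not just every vertex, to carry one $q$-edge and three $J$-edges), and that the isoradial critical weights sweep out the \emph{entire} critical curve rather than just a piece of it. The paper's combinatorial argument sidesteps all of this for the price of a short enumeration; your argument trades that brevity for structural insight but leaves more to check.
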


In the case $q=J$ we obtain the Onsager critical temperature for the square lattice, for $q = 0$ we obtain the critical temperature for the hexagonal lattice and in the limit $q \to \infty$ we obtain the critical temperature for the triangular lattice.

\begin{figure}
    \centering{
        \includegraphics[width=0.45\linewidth]{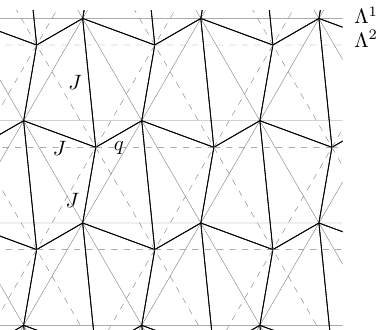}
    }
    \caption{Interaction in the pair Hamiltonian 
        for the shaken dynamics on the triangular lattice.
        Each spin of configuration $\sigma$ (living on the solid lattice) interacts with the spin at the same location and the three spins on its left in $\tau$ (living on the dashed lattice). The darker lines show that the pair interaction lives on a \emph{rectangular} lattice. For $q = J$ this lattice becomes the square one. As $q \to \infty$ the square lattice collapses onto the triangular lattice. If $q = 0$ the interaction graph becomes the
        homogeneous hexagonal lattice.
    }
    \label{fig:triangular_shaken}
\end{figure}

%-------------------------------------------------
%  GENERALIZED SHAKEN
%-------------------------------------------------
\section{The generalized shaken dynamics}
\label{gensh}

We can generalize the construction of the shaken dynamics.
Starting from a symmetric interaction ${\cal J}$ defining the Hamiltonian $H(\sigma)$, as in (\ref{ham1}),
we can define an arbitrary decomposition of the interaction matrix ${\cal J}$ in a sum of two matrices with
non negative entries
\begin{equation}\label{eq:Jd}
{\cal J}={\cal J}^o+{{\cal J}^o}^T.
\end{equation}
This means that every non oriented edge $\{x,y\}$ with weight $J_{xy}$ is decomposed in a pair
of oriented edges $(x,y)$ and $(y,x)$ with weight respectively ${\cal J}^o_{xy}$ and ${\cal J}^o_{yx}$.
Call $E^o$  the set of all these oriented edges
and apply the construction presented in Section \ref{Ggen} to construct the doubling graph by using this
set $E^o$ of oriented edges.

We proceed as  before defining the doubling Hamiltonian
\begin{align}\label{eq:generalized_hamiltonian}
\begin{aligned}
    H(\sigma^{(1)},\sigma^{(2)}) 
    & =
    -\scProd{\sigma^{(1)}, {{\cal J}^o}\sigma^{(2)}}
    -q\scProd{\sigma^{(1)},\sigma^{(2)}}
    -\scProd{\lambda,\sigma^{(1)}}
    -\scProd{\lambda,\sigma^{(2)}} \\
    & =
    -\scProd{{{\cal J}^o}^T\sigma^{(1)}, \sigma^{(2)}}
    -q\scProd{\sigma^{(1)},\sigma^{(2)}}
    -\scProd{\lambda,\sigma^{(1)}}
    -\scProd{\lambda,\sigma^{(2)}}.
\end{aligned}
\end{align}
In the case $\sigma^{(1)}=\sigma^{(2)}=\sigma$ by equation 
\eqref{eq:Jd} we have again
$H(\sigma,\sigma)=H(\sigma)-q|V|$.

The corresponding alternate dynamics
on the state space $\cX_V$  is defined with  two subsequent updating
as follows:
\begin{equation}
P^{1\to 2}(\sigma,\sigma'):=\frac{e^{-\beta H(\sigma,\sigma')}}{\overrightarrow{Z}_\sigma}, \qquad P^{2\to 1}(\sigma',\tau):=\frac{e^{-\beta H(\tau,\sigma')}}{\overleftarrow{Z}_{\sigma'}}
\end{equation}
and

\begin{equation}\label{gsh}
P^{sh}(\sigma,\tau)=\sum_{\sigma'\in\cX_V}P^{1\to 2}(\sigma,\sigma')P^{2\to 1}(\sigma',\tau)=\sum_{\sigma'\in\cX_V}
\frac{e^{-\beta H(\sigma,\sigma')}}{\overrightarrow{Z}_\sigma}\frac{e^{-\beta H(\tau,\sigma')}}{\overleftarrow{Z}_{\sigma'}}
\end{equation}

The results obtained in Theorem \ref{t0} can be immediately extended to this more general case.

The choice of the shaken dynamics discussed in Section \ref{Ggen}  is a particular case of generalized shaken dynamics
in which ${\cal J}^o_{xy}{\cal J}^o_{yx}=0$ for any pair $x,y$. 
In the general case the geometrical discussion of the doubling graph  of interaction is much more complicated. Also the interpolation between different geometries obtained for different values of the parameter $q$, as discussed  in Section \ref{Z2}, is  more involved in this generalized case.
\begin{figure}[tbh]
\centering
  \includegraphics[width=0.7\linewidth]{./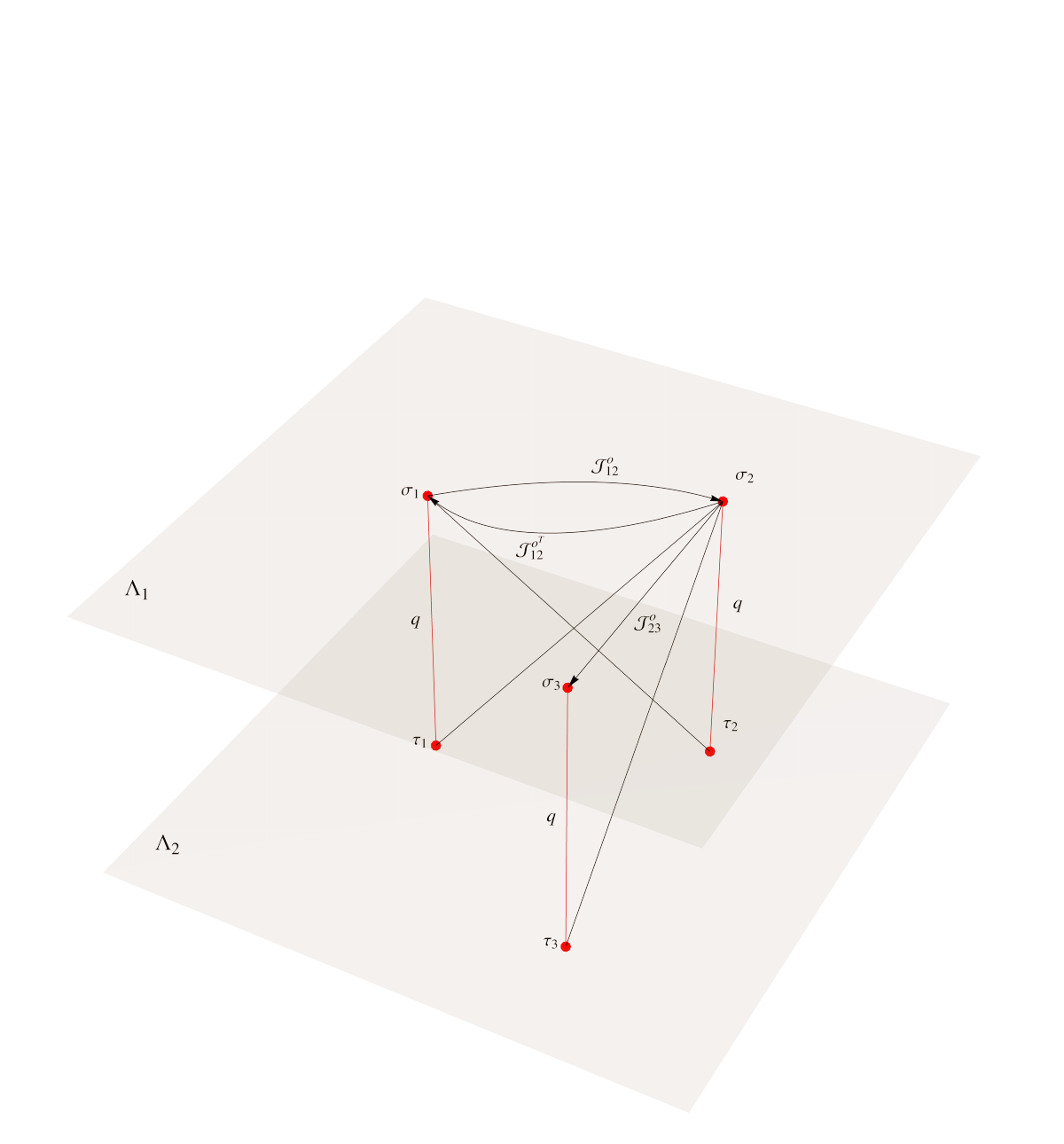}
    \caption{The construction of the doubling graph in the case of the generalized shaken dynamics.}
\end{figure}

Another particular choice in this class of generalized shaken dynamics is ${\cal J}^o=\frac{1}{2}{\cal J}$
corresponding to the PCA discussed in \cite{dss1}.

%-------------------------------------------------
%  APPLICATION TO OPTIMIZATION
%-------------------------------------------------
\section{Application to optimization problems}
\label{sec:application_to_optimization_problems}

The shaken dynamics on a general graph, and its generalization,
gets the possibility to look for the minimum of a general Hamiltonian $H(\sigma)$ defined on $\{-1, +1\}^V$ by means of a
parallel dynamics, by using the following result that could be considered a corollary of Theorem \ref{t0}.
In combinatorial optimization this can be used as a parallel approach to the Quadratic Unconstrained Binary Optimization (QUBO)
i.e., the problem of minimizing a quadratic polynomial of binary variables (see \cite{UBQP} for a survey). 
 
\begin{theorem}\label{c1}\label{corollary_q>}
Consider  a Hamiltonian $H(\sigma)$ of the form given in \eqref{eq:generalized_hamiltonian} on $\{-1, +1\}^V$, for any doubling Hamiltonian $H(\sigma,\tau)$ defined in \eqref{Hs},
corresponding to a bipartite graph $G^b=(V^b, E^b)$.
\begin{enumerate}[label=(\arabic*)]
    \item If 
    \begin{align}\label{q>}
    q>\max_{x\in V}\Big\{
    \sum_{y: \{x,y\}\in E^b}|J_{xy}|+|\lambda_x|
    \Big\}
    \end{align}
    then 
    the alternate dynamics defined with $H(\sigma,\tau)$ is a parallel algorithm to find configurations $\sigma$ minimizing $H(\sigma)$.
    \item For any positive $q$ the alternate dynamics
    defined with $H(\sigma, \tau)$ is a parallel
    algorithm to find a lower bound for
    $\min_{\sigma} H(\sigma)$.
\end{enumerate}

\end{theorem}

To assess the effectiveness of the strategy presented in Theorem~\ref{corollary_q>}, we put forward some preliminary tests on a 
simplified version of the Edwards-Anderson model where the weight of the edges connecting neighboring sites is set to $J=+1$ with
probability $\frac{1}{2}$ and $J=-1$ with probability $\frac{1}{2}$ and where the external field is zero. 
In this case, setting $q>2$ is sufficient to satisfy the hypotheses of the corollary. We compared the results with those obtained
with a single spin flip heat bath dynamics and considered ``grids'' with side length 128 and 256.
With this setting, the heuristic minima that we obtained with the shaken dynamics are essentially equivalent to those obtained
with the single spin flip dynamics. However the speed up with respect to the single spin flip dynamics was significant. To be as fair as possible in this comparison, we renormalized the time of the single spin flip dynamics with the number of vertices in the graph so to have the same number of ``attempted spin flips''. We observed a speed-up of
about 10 times when considering, for both algorithms, a CPU implementation and up to 200 times when comparing the CPU implementation
of the single spin flip dynamics with a GPU implementation of the shaken dynamics. 
These findings are in agreement with Theorem~\ref{t21}.

%-------------------------------------------------
%  PROOFS
%-------------------------------------------------
\section{Proofs of the results}
\label{sec:proofs}

\subsection{Proof of Theorem~\ref{t0}}
\label{p1}
We have immediately the  detailed balance condition w.r.t. the measure $\pi(\sigma)$
indeed
\begin{equation}\label{dbc}
\sum_{\sigma'\in\cX_V}\frac{e^{-\beta (H(\sigma,\sigma')+H(\tau,\sigma'))}}{\overleftarrow{Z}_{\sigma'}}=\overrightarrow{Z}_\sigma P^{sh}(\sigma,\tau)= \overrightarrow{Z}_\tau P^{sh}(\tau,\sigma)=
\sum_{\sigma'\in\cX_V}\frac{e^{-\beta (H(\tau,\sigma')+H(\sigma,\sigma'))}}{\overleftarrow{Z}_{\sigma'}}
\end{equation}

It is straightforward to prove that ${\pi^b}(\sigma^{(1)},\sigma^{(2)})$ is the stationary measure of $P^{alt}$
\begin{align}\label{stmeasalt}
\begin{aligned}
\sum_{\sigma^{(1)},\sigma^{(2)}}{{\pi^b}}(\sigma^{(1)},\sigma^{(2)})P^{alt}(\hbox{{\boldmath $ \sigma$},{\boldmath $ \tau$}}) 
	& = \sum_{\sigma^{(1)},\sigma^{(2)}}\frac{e^{-\beta H(\sigma^{(1)},\sigma^{(2)})}}{Z}
								\frac{e^{-\beta H(\sigma^{(1)},\tau^{(2)})}}{\overrightarrow{Z}_{\sigma^{(1)}}}
								\frac{e^{-\beta H(\tau^{(1)},\tau^{(2)})}}{\overleftarrow{Z}_{\tau^{(2)}}}\\
	&=\frac{e^{-\beta H(\tau^{(1)},\tau^{(2)})}}{Z}={\pi^b}(\tau^{(1)},\tau^{(2)})
\end{aligned}
\end{align}
$\hfill\qed$

Note that, in general
$$
{{\pi^b}}(\sigma^{(1)}, \sigma^{(2)})P^{alt}(\hbox{{\boldmath $ \sigma$},{\boldmath $ \tau$}})\not=
{{\pi^b}}(\tau^{(1)}, \tau^{(2)})P^{alt}(\hbox{{\boldmath $ \tau$},{\boldmath $ \sigma$}}).
$$
For instance consider the bipartite graphs $K_{n,n}$ with equal weights on all edges and where, for all $i$,
$(\sigma_{i}^{(1)}, \sigma_{i}^{(2)}) = (+1, +1)$ and $(\tau_{i}^{(1)}, \tau_{i}^{(2)}) = (+1, -1)$. 

\subsection{Proof of Theorem~\ref{t2}}
\label{p2}
To prove Theorem~\ref{t2} it is possible to argue as in the proof of Theorem~1.2 in \cite{pss1}.

In our notation $\pi_{\Lambda}$ and $\pi_{\Lambda}^G$ have the role,
respectively, of $\pi_{PCA}$ and $\pi_{G}$ used in \cite{pss1}.
Further let $g_{x}(\sigma) := J(\sigma_{x^{\downarrow}} + \sigma_{x^{\leftarrow}})$
be the analogue of $h_i(\sigma)$ in \cite{pss1}. Here we assume $\lambda < 0$. 
The case $\lambda > 0$ can be treated likewise.

Recalling that $\delta = e^{-2 \beta q}$,
it is possible to write $\overrightarrow{Z}_{\sigma}$ in the following way:
\begin{align}\label{eq:unpacked_pca_measure}
    \begin{aligned}
    \overrightarrow{Z}_{\sigma} & = \sum_{\tau}e^{-\beta H(\sigma, \tau)}
                       = \sum_{\tau}e^{-\beta H(\sigma, \sigma)}e^{-\beta [H(\sigma, \tau) - H(\sigma, \sigma)]}\\
                     & = e^{\beta q|\Lambda|}e^{-\beta H(\sigma)}
                        \sum_{\tau}e^{\beta \sum_{x : \sigma_x \neq \tau_x} -2g_{x}(\sigma)\sigma_{x}-2q - 2\lambda\sigma_{x}} \\
                     & = e^{\beta q|\Lambda|}e^{-\beta H(\sigma)}
                        \sum_{I\subset \Lambda}\delta^{|I|}\prod_{x \in I}e^{-\beta(2g_{x}(\sigma)\sigma_x + 2\lambda \sigma_{x})} \\
                      & = e^{\beta q|\Lambda|}e^{-\beta H(\sigma)}
                        \prod_{x \in \Lambda}(1 + \delta e^{-\beta (2g_{x}(\sigma)\sigma_x + 2\lambda \sigma_{x})})
    \end{aligned}
\end{align}
where the sum over $\tau$
has been rewritten as the sum over all subsets
$I \subset \Lambda$ such that $\tau_x = -\sigma_x$ if $x \in I$
and  $\tau_x = \sigma_x$ otherwise.
The factor $e^{\beta q|\Lambda|}$ does not depend on $\sigma$ and cancels out
in the ratio $\frac{\overrightarrow{Z}_{\sigma}}{Z}$.

Call $f(\sigma) := \prod_{x \in \Lambda}(1 + \delta e^{-\beta (2g_{x}(\sigma)\sigma_x + 2\lambda\sigma_{x})})$, \,
$w(\sigma) := e^{-\beta H(\sigma)}f(\sigma)= w^{G}(\sigma)f(\sigma)$. Then, by \eqref{eq:unpacked_pca_measure},
it follows 
\begin{align*}
    \pi_{\Lambda}(\sigma) & = \frac{w(\sigma)}{\sum_{\tau}w(\tau)} = \frac{w^G(\sigma)f(\sigma)}{\sum_{\tau}w^G(\tau)f(\tau)}
     = \frac{\frac{w^G(\sigma)}{Z^G}f(\sigma)}{\sum_{\tau}\frac{w^G(\tau)}{Z^G}f(\tau)}
      = \frac{\pi_{\Lambda}^G(\sigma)f(\sigma)}{\pi_{\Lambda}^G(f)}
\end{align*}
with $\pi_{\Lambda}^G(f) = \sum_{\sigma} \pi_{\Lambda}^G(\sigma)f(\sigma)$.

As in \cite{pss1}, using Jensen's inequality
the total variation distance between $\pi_{\Lambda}$ and $\pi_{\Lambda}^G$
can be bounded as
\begin{align*}\label{eq:tv_distance_bound}
	\| \pi_{\Lambda}-\pi_{\Lambda}^G\|_{TV}
						  \le \sqrt{\frac{\pi_{\Lambda}^G(f^2)}{(\pi_{\Lambda}^G(f))^2} - 1} =: \sqrt{(\Delta(\delta))}.
\end{align*}

To prove the theorem, it will be shown that
$\Delta(\delta) = O(\delta^2|\Lambda|)$.

By writing
$\Delta(\delta) = e^{\log(\pi_{\Lambda}^G(f^2)) - 2\log(\pi_{\Lambda}^G(f))} - 1,$
the claim follows by showing that the argument of the exponential divided by $|\Lambda|$ is analytic in $\delta$ and that the first order term of its expansion in $\delta$ cancels out.

In other words the claim follows thanks to the following lemma.
\begin{lemma}\label{lemma_analiticity_in_delta}
    There exists $J_{c}$ such that, for all $J > J_{c}$
    \begin{enumerate}
        \item $\frac{\log(\pi_{\Lambda}^G(f^2))}
                    {|\Lambda|}$ and
              $\frac{\log(\pi_{\Lambda}^G(f))}
                    {|\Lambda|}$ are analytic in $\delta$ for $|\delta| < \delta_{J}$
        \item $\frac{\log(\pi_{\Lambda}^G(f^2))}{|\Lambda|} - 2\frac{\log(\pi_{\Lambda}^G(f))}{|\Lambda|} = O(\delta^{2})$
    \end{enumerate}
\end{lemma}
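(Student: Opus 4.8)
The plan is to recognize $\pi_{\L}^G(f)$ and $\pi_{\L}^G(f^2)$ as partition-function ratios of Ising-type models and to apply the standard low-temperature cluster (polymer) expansion to obtain analyticity in $\d$, together with an explicit computation of the first-order coefficients. First I would observe that, since $J$ is large, under $\pi_{\L}^G$ the measure concentrates on the two ground states $\s\equiv +1$ and $\s\equiv -1$ (with $\lambda<0$ the $+1$ ground state is favoured, but both must be kept since on the torus the field term $-2\lambda\sum_x\s_x$ is a bulk term — actually with $\lambda<0$ strictly the all-plus state alone dominates, so I would first argue that the all-minus contribution is exponentially negligible and work around the single reference configuration $\underline{+}$). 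Expanding a generic $\s$ as a collection of Peierls contours around $\underline{+}$, the weight $w^G(\s)=e^{-H(\s)}$ factorizes over contours up to boundary interactions, and the factor $f(\s)=\prod_{x}(1+\d e^{-2g_x(\s)\s_x-2\lambda\s_x})$ is a product of local terms each of which equals $1+\d e^{-2\lambda}$ at sites $x$ deep inside the plus phase (where $g_x(\s)=2J$, $\s_x=+1$, giving $e^{-2g_x\s_x}=e^{-4J}$, which I will actually keep: the bulk value is $1+\d e^{-4J-2\lambda}$) and deviates from this bulk value only near contours.

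Next I would set up the polymer model: a polymer is a contour together with the finite set of sites where $f$ differs from its bulk value, and the activity of a polymer carries a factor $e^{-(\text{perimeter})J}$ from the contour and bounded factors in $\d$ from the $f$-terms. For $J>J_c$ and $|\d|<\d_J$ these activities satisfy the Kotecký–Preiss convergence criterion, so $\log(\pi_{\L}^G(f))/|\L|$ and $\log(\pi_{\L}^G(f^2))/|\L|$ each converge, as $|\L|\to\infty$, to a sum of cluster terms that is \emph{analytic in $\d$} in a neighbourhood of $0$; this gives item (1). Here I would be careful that the same expansion is performed for $f$ and for $f^2$ relative to the \emph{same} reference state, so that the two log-densities can be compared term by term.

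For item (2), the key point is that to zeroth order in $\d$ we have $f\equiv 1$, so $\log(\pi_{\L}^G(f))/|\L|=O(\d)$ and $\log(\pi_{\L}^G(f^2))/|\L|=O(\d)$; the claim is that the $O(\d)$ terms match after the factor $2$. The first-order term in $\d$ of $\pi_{\L}^G(f)$ is $\d\,\pi_{\L}^G\big(\sum_x e^{-2g_x(\s)\s_x-2\lambda\s_x}\big)=\d\sum_x \mathbb{E}_{\pi^G}[e^{-2g_x\s_x-2\lambda\s_x}]$, which by translation invariance equals $\d|\L|\,a$ for a constant $a$ depending only on $J,\lambda$; similarly the first-order term of $\pi_{\L}^G(f^2)$ comes from $\prod_x(1+\d c_x)^2=1+2\d\sum_x c_x+O(\d^2)$, giving $2\d|\L|\,a$. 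Hence $\log(\pi_{\L}^G(f^2))/|\L|-2\log(\pi_{\L}^G(f))/|\L| = \big(2\d a - 2\d a\big) + O(\d^2) = O(\d^2)$, using $\log(1+x)=x+O(x^2)$ and uniformity of the $O(\d^2)$ remainders from the convergent expansion. Combining with $\Delta(\d)=e^{|\L|\,[\,\cdots\,]}-1$ and condition (\ref{conddelta}) that $\d^2|\L|\to0$ yields $\Delta(\d)\to0$ and the theorem.

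The main obstacle is the rigorous control of the cluster expansion \emph{uniformly in $\d$} and the verification that the $f$-insertions do not spoil the Kotecký–Preiss bound: one must check that the extra per-site activity factors from $f$ (which are $1+O(\d)$, hence bounded, but do multiply over possibly large regions) can be absorbed, either by redefining polymer activities to include the ratio $f_x/f_x^{\text{bulk}}$ only near contours, or by a preliminary resummation of the bulk $f$-factor $(1+\d e^{-4J-2\lambda})^{|\L|}$ which cancels in the ratio $\pi_{\L}^G(f^2)/(\pi_{\L}^G(f))^2$ up to the very terms producing the $O(\d^2)$. Since \cite{pss1} carries out exactly this argument for $\lambda=0$, I would follow that proof line by line, the only new feature being the bookkeeping of the $e^{-2\lambda\s_x}$ factors, which are harmless because they are bounded and translation invariant.
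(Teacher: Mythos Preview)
Your overall strategy---writing $\pi_\L^G(f^k)$ as a ratio of partition functions, pulling out a bulk factor, and applying the low-temperature polymer expansion of \cite{pss1}---is exactly the route the paper takes. Two points deserve correction or sharpening.

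First, a sign slip: with $\lambda<0$ the field term $-2\lambda\sum_x\s_x$ favours $\s_x=-1$, so the reference configuration is $\underline{-}$, not $\underline{+}$. Accordingly the bulk factor of $f$ is $(1+\d e^{-4J+2\lambda})^{|\L|}$, not $(1+\d e^{-4J-2\lambda})^{|\L|}$, and the contour expansion is around the all-minus state. This is cosmetic but propagates through every formula.

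Second, and more substantively, the paper does not handle the $\lambda$-dependence by the vague ``bounded and translation invariant'' bookkeeping you propose. Instead it classifies sites into six types according to $(\s_x,\s_{x^\leftarrow},\s_{x^\downarrow})$, writes $f(\s)=(1+\d e^{-4J+2\lambda})^{|\L|}\tilde\xi(\s,\lambda)$, absorbs the $e^{4\lambda|V_+(\s)|}$ coming from $e^{-H(\s)}$ into a modified $\xi(\s,\lambda)$, and then checks the elementary inequality $\xi^k(\s,\lambda)\le\xi^k(\s,0)$ for $\lambda<0$. This single bound reduces the convergence of the polymer expansion to the $\lambda=0$ case already treated in \cite{pss1}, and item~(2) then follows from the corresponding statement there. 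Your sketch does not isolate such a reduction, and without it the ``harmless'' claim for the $e^{-2\lambda\s_x}$ factors is precisely the step that needs work: the field term contributes a volume factor $e^{4\lambda|V_+(\s)|}$ that must be merged with $\tilde\xi$ before any contour bound can be applied.

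Finally, your argument for item~(2) via Taylor expansion of $\pi_\L^G(f^k)$ is morally right but, as written, dangerous: the first-order term of $\pi_\L^G(f)$ is $\d|\L|a$, so $\log(\pi_\L^G(f))$ has a $\d^2|\L|^2$ term from $\log(1+x)=x-x^2/2+\cdots$, and you cannot conclude $O(\d^2)$ for the log-density this way. The correct statement is that analyticity of $\log(\pi_\L^G(f^k))/|\L|$ (with coefficients uniform in $|\L|$) comes \emph{from} the convergent cluster expansion, and only then may you differentiate at $\d=0$ to see the first-order coefficients cancel. You acknowledge this (``uniformity of the $O(\d^2)$ remainders from the convergent expansion''), but the order of the argument should be reversed: first the expansion, then the derivative.
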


\begin{proof}
    The analyticity of $\frac{\log(\pi_{\Lambda}^G(f^2))}{|\Lambda|}$ and
    $\frac{\log(\pi_{\Lambda}^G(f))}{|\Lambda|}$ is proven by showing that these quantities
    can be written as partition functions of an abstract polymer gas. The analyticity is obtained
    using standard cluster expansion.

    To carry over this task, we will rewrite $\pi_{\Lambda}^G(f^{k})$
    in terms of standard Peierls contours.
    Divide the sites in $\Lambda$ according to the value of the spins and number
    of edges of the Peierls contour left and below the site in the following way:
    \begin{itemize}[label=\adfbullet{43}]
        \item $\spinclassA$: $\{x \in \Lambda :
                            \sigma_{x} = -1 \land
                            (\sigma_{x^{\leftarrow}}= -1, \sigma_{x^{\downarrow}}= -1)\}$;
        \item $\spinclassB$: $\{x \in \Lambda :
                            \sigma_{x} = -1 \land
                            ((\sigma_{x^{\leftarrow}}= +1, \sigma_{x^{\downarrow}}= -1) \lor
                            (\sigma_{x^{\leftarrow}}= -1, \sigma_{x^{\downarrow}}= +1))\}$;
        \item $\spinclassC$: $\{x \in \Lambda :
                                \sigma_{x} = -1 \land
                                \sigma_{x^{\leftarrow}}= +1, \sigma_{x^{\downarrow}} = +1\}$;
        \item $\spinclassD$: $\{x \in \Lambda :
                            \sigma_{x} = +1 \land
                            (\sigma_{x^{\leftarrow}}= +1, \sigma_{x^{\downarrow}}= +1)\}$;
        \item $\spinclassE$: $\{x \in \Lambda :
                            \sigma_{x} = +1 \land
                            ((\sigma_{x^{\leftarrow}}= +1, \sigma_{x^{\downarrow}}= -1) \lor
                            (\sigma_{x^{\leftarrow}}= -1, \sigma_{x^{\downarrow}}= +1))\}$;
        \item $\spinclassF$: $\{x \in \Lambda :
                                \sigma_{x} = +1 \land
                                (\sigma_{x^{\leftarrow}}= -1, \sigma_{x^{\downarrow}}= -1)\}$;
    \end{itemize}
    With this notation, $f(\sigma)$ can be written as
    \begin{align}
    \begin{aligned}
        f(\sigma) & = [1+\delta e^{\beta(-4J + 2\lambda)}]^{|\Lambda|}
                   \prod_{x \in \spinclassB}\frac{(1+\delta e^{+ 2\beta \lambda})}
                                                  {[1+\delta e^{\beta(-4J + 2\lambda)}]}
                    \prod_{x \in \spinclassC}\frac{[1+\delta e^{\beta(4J + 2\lambda)}]}
                                                  {[1+\delta e^{\beta(-4J + 2\lambda)}]}\\
                  & \prod_{x \in \spinclassD}\frac{[1+\delta e^{\beta(-4J - 2\lambda)}]}
                                                               {[1+\delta e^{\beta(-4J + 2\lambda)}]}
                    \prod_{x \in \spinclassE}\frac{(1+\delta e^{-2 \beta \lambda})}
                                                               {[1+\delta e^{\beta(-4J + 2\lambda)}]}
                    \prod_{x \in \spinclassF}\frac{[1+\delta e^{\beta(4J - 2\lambda)}]}
                                                               {[1+\delta e^{\beta(-4J + 2\lambda)}]} \\
                  & = [1+\delta e^{\beta(-4J + 2\lambda)}]^{|\Lambda|} \tilde{\xi}(\sigma, \lambda)
    \end{aligned}
    \end{align}
    with
    \begin{align}
    \begin{aligned}
                \tilde{\xi}(\sigma, \lambda) = &
                    \left[\frac{(1+\delta e^{+2\beta \lambda})}
                               {[1+\delta e^{\beta(-4J + 2\lambda)}]} \right]^{\numspinsB}
                    \,\times\,
                    \left[\frac{[1+\delta e^{\beta(4J + 2\lambda)}]}
                               {[1+\delta e^{\beta(-4J + 2\lambda)}]}\right]^{\numspinsC} 
                    \,\times\,\\
                &  \left[\frac{[1+\delta e^{\beta(-4J - 2\lambda)}]}
                               {[1+\delta e^{\beta(-4J + 2\lambda)}]}\right]^{\numspinsD}
                    \,\times\,
                    \left[\frac{(1+\delta e^{-2\beta \lambda})}
                               {[1+\delta e^{\beta(-4J + 2\lambda)}]}\right]^{\numspinsE}
                    \,\times\, \\
                &    \left[\frac{[1+\delta e^{\beta(4J - 2\lambda)}]}
                               {[1+\delta e^{\beta(-4J + 2\lambda)}]}\right]^{\numspinsF}
    \end{aligned}
    \end{align}

    For a given a configuration $\sigma\in\cX_{\Lambda}$, we denote by $\gamma(\sigma)$ its Peierls contour
    in the dual ${\cal B}_\Lambda^*=\cup_{(x,y)\in {\cal B}_\Lambda}(x,y)^*$
    \begin{equation}\label{Pc1}
        \gamma(\sigma):= \{(x,y)^*\in {\cal B}_\Lambda^*:\; \sigma_x\sigma_y=-1\}
    \end{equation}

    Noting that $e^{-\beta H(\sigma)} = e^{(2J - 2\lambda)\beta |\Lambda|} e^{-2\beta J|\gamma(\sigma)| +4\beta  \lambda\numpluses{\sigma}}$, with
    $\numpluses{\sigma} = \sum_{x \in \Lambda} 1_{\{\sigma_{x} = +1\}}$
    is the number of plus spins in $\Lambda$ of configuration $\sigma$, we have
    \begin{align}
        \pi_{\Lambda}^G(f^{k}) = \frac{1}{Z^{G}}e^{(2J - 2\lambda)\beta |\Lambda|}
                            [1+\delta e^{\beta(-4J + 2\lambda)}]^{k|\Lambda|}
                            \sum_{\sigma}\left[e^{-2\beta J|\gamma(\sigma)|+ 4\beta \lambda\numpluses{\sigma}} \tilde{\xi}(\sigma, \lambda)^k
                            \right]
    \end{align}

    Setting
    \begin{align}
    \begin{aligned}
                \xi(\sigma, \lambda)  = &
                    \left[\frac{(1+\delta e^{+2\beta \lambda})}
                            {[1+\delta e^{\beta(-4J + 2\lambda)}]} \right]^{\numspinsB} \,\times\,
                    \left[\frac{[1+\delta e^{\beta(4J + 2\lambda)}]} \,\times\,
                            {[1+\delta e^{\beta(-4J + 2\lambda)}]}\right]^{\numspinsC} \,\times\,\\
                &  \left[\frac{e^{+2\beta \lambda}[1+\delta e^{\beta(-4J - 2\lambda)}]}
                            {[1+\delta e^{\beta(-4J + 2\lambda)}]}\right]^{\numspinsD} \,\times\,
                    \left[\frac{e^{+2\beta \lambda}(1+\delta e^{-2\beta \lambda})}
                            {[1+\delta e^{\beta(-4J + 2\lambda)}]}\right]^{\numspinsE} \,\times\, \\
                &    \left[\frac{e^{+2\beta \lambda}[1+\delta e^{\beta(4J - 2\lambda)}]}
                            {[1+\delta e^{\beta(-4J + 2\lambda)}]}\right]^{\numspinsF}
    \end{aligned}
    \end{align}
    allows us to write, for $k \in \{1, 2\}$,
    \begin{align}
        \sum_{\sigma}\left[e^{-2\beta J|\gamma(\sigma)|+ 4\beta \lambda\numpluses{\sigma}} \tilde{\xi}(\sigma, \lambda)^k
        \right] =
        \sum_{\sigma}\left[e^{-2\beta J|\gamma(\sigma)|}\left(e^{+{2\beta \lambda\numpluses{\sigma}}}\right)^{2-k} \xi(\sigma, \lambda)^k
        \right]
    \end{align}

    A straightforward computation yields
    $\xi(\sigma, \lambda)^k \le \xi(\sigma, 0)^k$ and then
    \begin{equation}
    \begin{aligned}
        \sum_{\sigma} \left[
            e^{-2\beta J|\gamma(\sigma)|} \left(
                    e^{+{2\beta \lambda\numpluses{\sigma}}}
            \right)^{2-k} \xi(\sigma, \lambda)^k 
        \right]
        &d\leq \sum_{\sigma} e^{-2\beta J|\gamma(\sigma)|} \xi(\sigma, 0)^k \\
        & = 2\sum_{\gamma}e^{-2\beta J|\gamma|} \xi(\gamma, 0)^k
    \end{aligned}
    \end{equation}
    
    where $\xi(\gamma, 0)^k$ coincides with $\xi^I_k(\Gamma)$ in the proof of Lemma 2.3 in \cite{pss1},
    with $\numspinsB + \numspinsE = |l_{1}(\Gamma)|$ and
    $\numspinsC + \numspinsF = |l_{2}(\Gamma)|$.

    This implies that the proof can be concluded following the same steps as in \cite{pss1}.
\end{proof}

\subsection{Proof of Theorem ~\ref{t22}}

Starting from $-\mathbb{1}$ the first spin flip is a move against the drift having a probability
$e^{-2\beta |h^{Sh}_x(-\mathbb{1})|}= e^{-2\beta(2J+q-\lambda)}$ for some $x\in\Lambda$. Different exits from $-\mathbb{1}$,
with more than a single spin flip, have a probability exponentially smaller than this.
Hence, denoting for simplicity a configuration with the set of its plus spins, we have $X^{Sh}_{\tau_{-\mathbb{1}^c}}=\{x\}$ and
therefore
\[
\lim_{\beta\to\infty}P^{Sh}_{-\mathbb{1}} 
    \Big(\{\tau_{-\mathbb{1}^c}<T_{\delta/2}\} \cap \{X^{Sh}_{\tau_{-\mathbb{1}^c}}=\{x\} \, \hbox{ for some } x\in\Lambda\}\Big) = 1.
\]
 Starting now from this single plus spin in $x=(i,j)$ we have a sequence of $[L/2]$ moves along the drift producing a configuration in
 ${\cal{D}}$ with plus spin in the diagonal containing $x$, i.e. $D_k$ with $k={(i+j-1) \mathrm{mod}\;L}$. Indeed in the first half ($dl$) step of the dynamics the configuration $\{(i,j+1),(i+1,j)\}$ is reached with a move along the drift and in the subsequent ($ur$) half step the configuration $\{(i,j),(i-1,j+1),(i+1, j-1)\}$ is reached 
 and so on up to reach in  $[L/2]$ steps along the drift the configuration $\omega_1=\omega_1(x)\in {\cal D}$ with plus spins in $D_k$. 
 We obtain
\[
\lim_{\beta\to\infty}P^{Sh}_{-\mathbb{1}} 
    \Big(\{\tau_{-\mathbb{1}^c}<T_{\delta/2}\} 
    \cap \{X^{Sh}_{\tau_{-\mathbb{1}^c}}=x \text{ \small{for some} } x\in\Lambda\}
    \cap \{ X^{Sh}_{t_1}=\omega_1(x)\}
    \cap \{t_1<2 T_{\delta/2}\}\Big)= 1 
\]

The minimal cost to leave a configuration in ${\cal{D}}$, i.e., the minimal $|h^{Sh}_.(\cdot)|$ to pay in the exponent of the probability,
is $2J-\lambda-q$ corresponding to the cost of a flip to plus in a site adjacent to the diagonal. Indeed to flip to minus
a spin in the diagonal has a cost $2J+\lambda-q$. To flip a spin not adjacent to the diagonal has a larger cost. This implies
that in a subsequent interval of time $e^{2\beta(2J-\lambda-q+\delta)}$ an increased configuration $\omega_2\in{\cal D}$ is
reached with $\omega_2=\omega_1^+$ or $\omega_2=\omega_1^-$ and  $I_{\omega_1^\pm}=k\pm1$.
Note that, by the same arguments, it follows that the state $-\mathbb{1}$ is indeed metastable, i.e., 
it has the largest stability level
in the sense of \cite{MNOS}.
 
We get 
\begin{align*}
\lim_{\beta\to\infty}P^{Sh}_{-\mathbb{1}} \Big(
    & \{\tau_{-\mathbb{1}^c}<T_{\delta/2}\} 
      \cap \{X^{Sh}_{\tau_{-\mathbb{1}^c}}=x \, \hbox{ for some } x\in\Lambda\}
      \cap \{ X^{Sh}_{t_1}=\omega_1(x)\} \\
    & \cap \{X^{Sh}_{t_2}=\omega_1(x)^+ \hbox{ or } \omega_1(x)^-\}\cap \{t_2<3T_{\delta/2}\}
    \Big) = 1.
\end{align*}

By iterating this argument $L-2$ times we get the result.
\hfill $\qed$

\subsection{Proof of Theorem ~\ref{th:critical_curve_triangular_lattice}}

This is an application of Theorem~1.1 in \cite{cimasoni2013critical} holding for a finite planar, non degenerate and doubly periodic weighted graph $G= (V, E)$.
Denote by $\cE(G)$ the set of all even subgraphs of $G$, that is, those subgraphs where the degree of each vertex is even. 
Further call ${\cE}_0(G)$ the set of even subgraphs of the lattice winding an even number of times around each direction of the torus and ${\cE}_1(G)={\mathcal E}(G)\setminus{\cE}_0(G)$.
Then the critical curve relating the parameters $J$ and $q$ of the Hamiltonian is the solution of the equation

\begin{equation}
	\sum_{\gamma\in{\cE}_0(G)}\prod_{e\in\gamma}\tanh J_e=\sum_{\gamma\in{\cE}_1(G)}\prod_{e\in\gamma}\tanh J_e.
\label{eq:graphs_critical_curve}
\end{equation}

The square lattice induced by the shaken dynamics on the triangular lattice, with $J_e = q$ for the self--interaction edges and $J_e = J$ for the other edges,  satisfies the hypotheses of this theorem and can be obtained by periodically repeating the elementary cell of Figure~\ref{fig:elementary_trianguar_shaken}.

\begin{figure}
    \centering
    \subfigure[]
    {\includegraphics[width=0.22\linewidth]{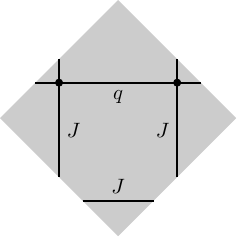}} \hfill
    \subfigure[]
    {\includegraphics[width=0.22\linewidth]{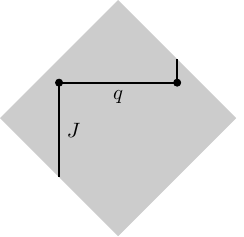}} \hfill
    \subfigure[]
    {\includegraphics[width=0.22\linewidth]{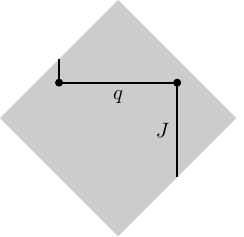}} \hfill
    \subfigure[]
    {\includegraphics[width=0.22\linewidth]{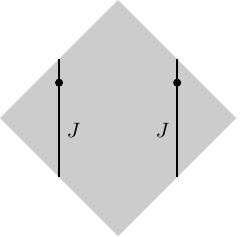}} \\
    \subfigure[]
    {\includegraphics[width=0.22\linewidth]{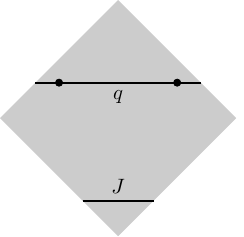}} \hfill
    \subfigure[]
    {\includegraphics[width=0.22\linewidth]{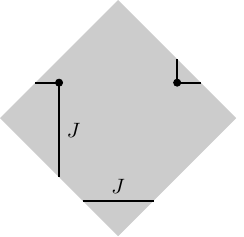}} \hfill
    \subfigure[]
    {\includegraphics[width=0.22\linewidth]{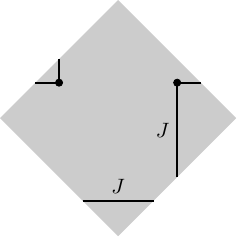}} \hfill
    \subfigure[]
    {\includegraphics[width=0.22\linewidth]{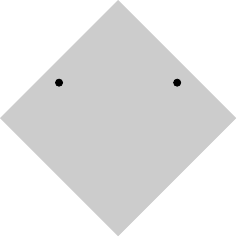}}
\caption{The elementary cell (a) for the shaken interaction on the triangular lattice and the corresponding even subgraphs. Subgraphs (a) and (h) wind around the torus an even number of times and are, therefore, in $\cE_0$ whereas the remaining subgraphs are in $\cE_1$.}
\label{fig:elementary_trianguar_shaken}
\end{figure}
A direct application of \eqref{eq:graphs_critical_curve} yields the claim.
\hfill $\qed$

\subsection{Proof of Theorem~\ref{corollary_q>}}
Let $H(\sigma)$ be a Hamiltonian  of the form given in (\ref{ham1}) and let $H(\sigma, \tau)$ be its doubling.

The invariant measure $\pi^b$ of the alternate dynamics $P^{alt}$ defined on the bipartite graph $G^b$ with Hamiltonian $H(\sigma, \tau)$ is identified in Theorem~\ref{t0}:
\begin{equation*}
{{\pi^b}}(\hbox{\boldmath $ \sigma$}):=\frac{1}{Z}e^{-\beta H(\hbox{{\boldmath $ \sigma$}})}.
\end{equation*}
At very low temperature, this measure concentrates on the set of configurations minimizing the Hamiltonian $H(\sigma, \tau)$. We have 
\begin{equation}\label{eq:minimizer_for_tau_equal_sigma}
   \min_{\sigma,\tau}H(\sigma,\tau) \leq \min_{\sigma} H({\sigma}, {\sigma})=\min_{\sigma} H(\sigma) - q |V|
\end{equation}
yielding immediately claim (2).

If the equality 
\begin{equation}\label{eq:minimizer_for_tau_equal_sigma_vecchio}
    \min_{\sigma,\tau}H(\sigma,\tau)=\min_{\sigma}H(\sigma,\sigma)
\end{equation}
holds, the parallel algorithm provided by the alternate dynamics may be used to find configurations minimizing $H(\sigma)$.

If the parameter $q$ satisfies condition \eqref{q>}, the validity of equation \eqref{eq:minimizer_for_tau_equal_sigma_vecchio} can be verified by contradiction. Assume indeed that there exists a pair configuration $(\bar{\sigma}, \bar{\tau})$ such that $$
\min_{\sigma,\tau}H(\sigma,\tau)= H(\bar{\sigma}, \bar{\tau})
$$
and $\bar{\sigma}\neq \bar{\tau}$ at least in a vertex $x \in V$.
Under condition \eqref{q>}, a spin flip at  vertex $x$ leads to a lower value for the doubling Hamiltonian, contradicting the hypothesis that the pair $(\bar{\sigma}, \bar{\tau})$ is the minimizer of 
$H(\sigma, \tau)$. 
\hfill $\qed$

%-------------------------------------------------
%  CONCLUSION
%-------------------------------------------------
\section{Conclusions and open problems}
\label{dop}
We conclude our paper with some  general comments and open problems.

With the shaken dynamics we have constructed a reversible parallel dynamics and we control its invariant measure
with arbitrary boundary conditions. The advantages of the shaken dynamics can be summarized as follows:
\begin{itemize}[label=\adfbullet{43}]
    \item The shaken prescription can be applied 
          to general interaction graphs. This allows to construct parallel algorithms to tackle a large class of optimization problems.
    \item The shaken prescription,
          modifying suitably the parameters appearing in the doubled Hamiltonian, allows to compare the spin systems defined on different geometries.
    \item The dynamics can be interpreted 
          as a model for systems in which some kind of interaction alternates its direction on short timescale. See below for an example referring to the tidal dissipation.
\end{itemize}

As noted by an anonymous referee, the shaken dynamics could be extended in order to consider spin systems with more general summable interactions, not necessarily limited to two or one body terms. This could have important applications in combinatorial optimization problems such as set covering problem.

The construction of the shaken dynamics
and, in particular,
of its generalization,
is not a unique prescription.
This freedom in the definition of the oriented graph
defining the dynamics
and in the choice of the
parameters involved could be usefully exploited
in applications to speed up the dynamics.

Finally we want to outline that the presence of an alternate interaction suggests that the shaken dynamics, with $B\not=\emptyset$ and $q$ large, could be a good model to take into account the effects of Earth's tides in geodynamics and other tidal dissipative phenomena in Solar System. We assume that the inner structure of the Earth and of the satellites of the major planets may be described in terms of constraints that can be randomly broken, with a probability depending on the state of the nearest neighbors of each constraint. Tidal effects could give a dependence of this breaking probability on an alternate direction, related to the tidal state and to the related tidal currents. This geological and astronomical application will be developed in forthcoming papers.

%-------------------------------------------------
%  ACKNO
%-------------------------------------------------
\noindent
{\bf Acknowledgments:}
The authors are grateful to Emilio Cirillo, Alexandre Gaudillière and Gabriella Pinzari for useful discussions.
We thank the anonymous referees who greatly helped to improve the
readability of the whole work.
B.S. and E.S. thank
the support of the A*MIDEX project (n. ANR-11-IDEX-0001-02) funded by the  ``Investissements d'Avenir" French Government program, managed by the French National Research Agency (ANR).
B.S. acknowledges the MIUR Excellence Department Project awarded to
the Department of Mathematics, University of Rome Tor Vergata, CUP E83C18000100006. E.S. has been supported  by the PRIN 20155PAWZB ``Large Scale Random Structures''. A.T. has been supported by Project FARE 2016 Grant R16TZYMEHN.

%-------------------------------------------------
%  BIBLIO
%-------------------------------------------------
\bibliographystyle{amsplain}
\providecommand{\bysame}{\leavevmode\hbox to3em{\hrulefill}\thinspace}
\providecommand{\MR}{\relax\ifhmode\unskip\space\fi MR }
% \MRhref is called by the amsart/book/proc definition of \MR.
\providecommand{\MRhref}[2]{%
  \href{http://www.ams.org/mathscinet-getitem?mr=#1}{#2}
}
\providecommand{\href}[2]{#2}

\end{document}